\documentclass[11pt]{article}

\usepackage[utf8]{inputenc}
\usepackage[margin=1in]{geometry}
\usepackage{amsmath,amssymb,amsthm,amsfonts,mathrsfs}
\usepackage{graphicx}
\usepackage{booktabs}
\usepackage{siunitx}
\usepackage{float}

\usepackage{caption}
\usepackage{subcaption}
\usepackage{multirow}
\usepackage{algorithm}
\usepackage[noend]{algpseudocode}
\usepackage{tikz}
\usetikzlibrary{positioning,arrows.meta,calc}
\usepackage{physics}
\usepackage{dsfont}
\usepackage{fancyhdr}
\usepackage{xcolor}
\definecolor{plantnode}{RGB}{180,210,170} 
\definecolor{edgegray}{RGB}{90,90,90}     
\usepackage{ragged2e}
\usepackage{hyperref}
\usepackage[
  backend=biber,
  bibstyle=ieee,
  citestyle=numeric-comp, 
  sorting=none,           
  maxnames=3,             
  minnames=1             
]{biblatex}
\newtheorem{theorem}{Theorem}
\newtheorem{lemma}[theorem]{Lemma}

\title{\begin{center}
\textbf{UNDERSTANDING AND MANAGING FROGEYE LEAF SPOT THROUGH\\
NETWORK-BASED MODELING IN SOYBEAN}\\[4pt]
{\large \textit{A PREPRINT}}\\[1pt]

\end{center}}
\usepackage{tabularx} 

\author{
\begin{tabularx}{\textwidth}{@{}*{2}{>{\centering\arraybackslash}X}@{}}
\textbf{Chinthaka Weerarathna} & \textbf{Thien-Minh Le$^{*}$} \\[6pt]
Department of Mathematics & Department of Mathematics \\ 
University of Tennessee at Chattanooga & University of Tennessee at Chattanooga \\ 
Chattanooga, TN & Chattanooga, TN \\[6pt]
\texttt{chinthakaw236@gmail.com} & \texttt{thien-le@utc.edu}
\end{tabularx}
\\[60pt]
\textbf{Jin Wang}\\[6pt]
Department of Mathematics\\
University of Tennessee at Chattanooga\\
Chattanooga, TN\\[6pt]
\texttt{jin-wang02@utc.edu}
}

\date{\today}

\begin{document}
\maketitle

\begingroup
\renewcommand\thefootnote{}
\footnotetext{$^{*}$corresponding author: Thien-Minh Le; thien-le@utc.edu}
\endgroup

\begin{abstract}

\noindent
Frogeye Leaf Spot (FLS), caused by \emph{Cercospora sojina}, poses a significant threat to soybean production, with yield losses of 30--60\%. Traditional mass-action models assume homogeneous mixing, which rarely holds in real fields and limits their ability to inform FLS management. To address this, we developed a network-based model that incorporates real-field structure to improve FLS management in soybeans. Using approximate Bayesian computation, we estimated key epidemiological parameters and found that infection origin can shift the balance between transmission routes. Data analyses indicated that tillage and non-tillage plots did not differ significantly in fungal spread, decay, or disease severity. Finally, we show that early, targeted roguing is more effective than delayed or random removal. Together, these findings offer science-based guidance for FLS management and highlight the value of network-based models to inform agricultural disease control.
\end{abstract}

\noindent\textbf{Keywords:} Approximate Bayesian Computation · Disease Management · Epidemiology · Frogeye Leaf Spot · Network Model ·  Soybean


\section{Introduction} \label{sec:intro}
Frogeye leaf spot (FLS) is a widely distributed soybean disease attributed to the fungus \textit{Cercospora sojina}~\cite{Barro2023FLS}. In recent years, the disease has been reported with increasing frequency in regions that historically experienced lower FLS pressure, including several Midwestern and Northern soybean producing states~\cite{Barro2023FLS,Cai2024FLS}. This broader distribution is partly associated with warmer winter conditions and the extended survival of infected residue at the soil surface, where the pathogen may persist for up to two years in conservation tillage systems~\cite{Barro2023FLS,Neves2022,NCStateFLS,NDSU_FLS2023}. At the same time, reduced sensitivity of \textit{C. sojina} to commonly used fungicides has been documented across multiple soybean producing regions, decreasing the reliability of single mode products and motivating more integrated disease management strategies~\cite{Neves2020QoI,Neves2021Baseline,Bradley2023INCCA}.

FLS epidemics can lead to significant yield reductions by diminishing the photosynthetic area and causing early defoliation. Untreated plot studies have reported yield losses of approximately $10-20\%$, while under highly favorable environmental conditions and susceptible cultivars, losses can reach $30-60\%$~\cite{Barro2023FLS,agronomy_FLS_Seviarity,Dashiell1991}. At larger scales, coordinated disease surveys from the United States and Ontario show that FLS contributed to annual yield reductions of approximately {$10^{5}$ to $10^{6}$} metric tons between 2010 and 2019, with losses during 2015 - 2019 generally exceeding those observed in 2010 - 2014~\cite{Allen2017,Bradley2021}. Recent evidence indicates that FLS remains a growing and geographically expanding threat to soybean production in the United States. Although historically considered a predominantly southern disease, FLS has become increasingly prevalent in northern soybean-growing regions, with recent outbreaks reported across the Midwest, including Indiana, Iowa, Ohio, Wisconsin, Illinois, and North Dakota~\cite{Cai2024IndianaFLS,Neves2022ND,CPN2024}. In North Dakota, where FLS was first confirmed in 2020, continued field scouting has documented its presence across dozens of counties, highlighting ongoing northward expansion~\cite{Webster2026NDReport}. One important factor contributing to the recent expansion of the disease is the widespread development of resistance to commonly used foliar fungicides. Since 2010, fungicide-resistant populations of \textit{C. sojina} have been detected across more than 20 soybean-producing states, reducing the effectiveness of chemical control and increasing the risk of disease spread~\cite{Zhang2012,Neves2022}. In the 2024 growing season alone, FLS caused an estimated \$30.9 million in soybean yield losses in the United States, with the greatest economic impacts concentrated in southern and central production regions~\cite{CPN2024}. Together, these trends indicate that FLS is increasing both in economic impact and geographic spread, highlighting the need for modeling frameworks that account for localized transmission, spatial heterogeneity, and field-level management practices.

FLS involves both direct plant-to-plant transmission and an environmental pathway linked to inoculum that persists in residue and soil~\cite{Barro2023FLS, MBE2024FLS}. Infections initiated early in the season can lead to repeated cycles of secondary spread under favorable weather, giving the epidemic a strongly polycyclic character~\cite{Wise2015}. These features highlight the need for models that represent both local plant interactions and environmental contributions to transmission.

Mathematical epidemiology provides a structured way to describe how infections emerge, spread, and decline in host populations, and how management actions can alter those trajectories. Compartment models, introduced by Kermack and McKendrick in their classical formulation of the Susceptible–Infectious–Removed (SIR) system~\cite{KermackMcKendrick1927}, remain the foundation for many mechanistic approaches. Extensions such as the Susceptible–Exposed–Infectious–Removed (SEIR) model incorporate latent periods that are essential for accurately representing plant pathogens, whose progression from infection to infectiousness often spans several days. These frameworks allow key epidemiological quantities such as thresholds for invasion, equilibrium states, and the basic reproduction number \(R_{0}\) to be derived using dynamical systems theory~\cite{Murray2003,SIRModel}. Although widely applied in human and animal epidemiology, mechanistic compartment models have been used far less extensively in plant pathology~\cite{Garrett2018Networks,ShawPautasso2014}. Previous mathematical efforts to study FLS have used empirical, statistical, and mechanistic approaches. Empirical studies based on regression analyses and disease progress metrics such as area under the disease progression curve (AUDPC) quantify severity and yield loss but do not represent the underlying transmission process~\cite{Barro2023FLS,ISU_FLS}. Mechanistic modeling of FLS has recently been initiated by Yang and Wang, who proposed the first ordinary differential equation (ODE) framework developed specifically for soybean to investigate FLS epidemics~\cite{MBE2024FLS}
. Their model explicitly incorporates both primary and secondary transmission pathways and couples plant infection dynamics with the intrinsic accumulation and decay of pathogen inoculum in contaminated soil, supported by equilibrium, stability, and numerical analyses. More recently, this framework has been extended to partially diffusive partial differential equation (PDE) models that incorporate spatial diffusion of the pathogen within soybean fields, enabling characterization of the resulting spatiotemporal disease dynamics~\cite{PDE_Jin_2025}. In parallel, fractional-order extensions have been proposed to capture memory effects and historical dependence in FLS transmission, providing additional flexibility for modeling long-term disease persistence~\cite{FARMAN}. Additional studies have employed meta-analytic methods to quantify severity-yield relationships and machine learning approaches for disease detection and classification~\cite{agronomy_FLS_Seviarity,agronomy_machine_learning}. These approaches are formulated in empirical or data-driven frameworks and do not explicitly represent transmission mechanisms or infection progression at the plant level. Existing mechanistic models of FLS are primarily deterministic and are developed under the assumption of homogeneous mixing or continuous spatial diffusion; however, they do not account for discrete plant-to-plant contact structure, field geometry, or localized residue-mediated inoculum. Consequently, these modeling approaches limit their ability to study the effects of management interventions such as tillage or roguing within a spatially explicit, plant-level modeling framework calibrated to field observations.

To address the existing gap, we propose a network-based SEIRB model that represents individual soybean plants as nodes connected by proximity-defined edges that govern transmission. This formulation replaces homogeneous mixing with an explicit plant-level contact structure informed by field geometry and is coupled to a dynamic soil reservoir that captures both plant-to-plant and soil-mediated infection pathways. The model is calibrated to field observations using approximate Bayesian computation, enabling statistical inference of transmission and environmental parameters and explicit quantification of uncertainty. Within this unified framework, we examine how alternative initial spatial configurations influence epidemic trajectories and evaluate the impacts of management interventions, including tillage and roguing, on residue-driven inoculum, network connectivity, and disease outcomes.

The remainder of this article is organized as follows. Section~\ref{sec:methods} describes the methodology. In particular, Section~\ref{sec:priliminaries} introduces the classical SEIRB formulation of Yang and Wang~\cite{MBE2024FLS} for FLS, while Section~\ref{sec:Network_model} develops the proposed network-based model as a spatial extension of the classical framework. Section~\ref{sec:results} reports the main findings. Specifically, Section~\ref{sec:Res_data} presents parameter estimation results; Section~\ref{sec:res_tillage} evaluates the effects of tillage on direct transmission, soil-mediated exposure, and cumulative disease outcomes; and Section~\ref{sec:res_rogue} investigates the effectiveness of roguing strategies under different protocols within the spatial contact network. The Appendix provides additional analytical insight into the equilibrium behavior of the soil reservoir. Finally, Section~\ref{sec:conclusion} summarizes the main findings, discusses implications for disease management, and outlines future research directions.

\section{Methods} \label{sec:methods}

We begin by reviewing the SEIRB compartment model of Yang and Wang~\cite{MBE2024FLS} (Section~\ref{sec:priliminaries}), and then present its network-based extension as the proposed model (Section~\ref{sec:Network_model}).

\raggedbottom\subsection{The SEIRB Framework for Frogeye Leaf Spot by~\cite{MBE2024FLS} }\label{sec:priliminaries}

Compartmental models of the Susceptible-Exposed-Infectious-Recovered (SEIR) type have been adapted in many forms for plant disease systems, with modifications introduced to represent latent periods, environmental reservoirs, or multiple transmission routes. For FLS, the Susceptible-Exposed-Infectious-Recovered-Bacteria (SEIRB) formulation developed by Yang and Wang~\cite{MBE2024FLS} extends the classical SEIR framework by incorporating an environmental compartment \(B(t)\) that tracks inoculum present in soil and crop residue. In this formulation, each plant occupies one of four epidemiological states: susceptible \(S(t)\), exposed \(E(t)\), infectious \(I(t)\), or removed \(R(t)\), while the environmental compartment \(B(t)\) represents the density of active fungal spores in soil or plant debris. Susceptible plants become exposed through two transmission pathways: direct contact with infectious plants and soil-mediated inoculum (see Fig.~\ref{fig:seirb_and_network}a). The resulting dynamical system is given by system~(\ref{eq:siroriginal}) below.

\begin{equation}\label{eq:siroriginal}
\left\{\begin{array}{llll}
\dfrac{\mathrm{d} S(t)}{\mathrm{d} t} = \mu_{0} N- (\theta I+ \beta B)S-\mu_{0} S  \\ \\
\dfrac{\mathrm{d} E(t)}{\mathrm{d} t} = (\theta I+ \beta B)S- \mu_{0} E - \sigma_{0} E\\ \\
\dfrac{\mathrm{d} I(t)}{\mathrm{d} t} = \sigma_{0} E - \mu_{0} I - \gamma_{0} I \\ \\
\dfrac{\mathrm{d} R(t)}{\mathrm{d} t} = \gamma_{0} I -\mu_{0} R \\ \\
\dfrac{\mathrm{d} B(t)}{\mathrm{d} t} = r_{0}B(1-\dfrac{B}{k_{0}})-\tau B +\xi I \\ \\

\end{array}\right.
\end{equation}

Here \(N\) is the number of total plants (constant). The parameter \(\mu_{0}\) is the natural birth and removal rate of plants, \(\theta\) and \(\beta\) are the secondary and primary transmission rates, \(\sigma_{0}^{-1}\) is the mean latent period, and \(\gamma_{0}\) is the removal rate of infectious plants. The environmental inoculum dynamics are governed by the intrinsic growth rate \(r_{0}\), carrying capacity \(k_{0}\), decay rate \(\tau_{0}\), and contribution rate \(\xi\) from infectious plants. All parameters are nonnegative.

The above compartmental SEIRB model was formulated to describe FLS transmission through both plant-to-plant and soil-mediated pathways under the assumption of homogeneous mixing. The biological progression parameters were fixed at $\sigma_{0}=1/10$ and $\gamma_{0}=1/75$ per day, with natural plant removal $\mu_{0}=1/150$ per day. Soil inoculum dynamics followed logistic growth with $r_{0}=0.001$ and $k_{0}=60{,}000$, and the environmental decay rate was set to $\tau_{0}=1/(2\times365)$ per day based on empirical evidence of pathogen persistence in crop residue. The initial soil inoculum density is $B(0)=4000$ conidia per ml. These fixed parameter values were adapted from previously published experimental and modeling studies to ensure biological consistency with established disease progression dynamics. In this work, the authors estimated the transmission parameter vector $\boldsymbol{\psi}=(\theta,\beta,\xi)$ using field disease-severity data. This formulation provides the mechanistic foundation for the spatial network extension developed in the next section.

\vspace{0.8cm}

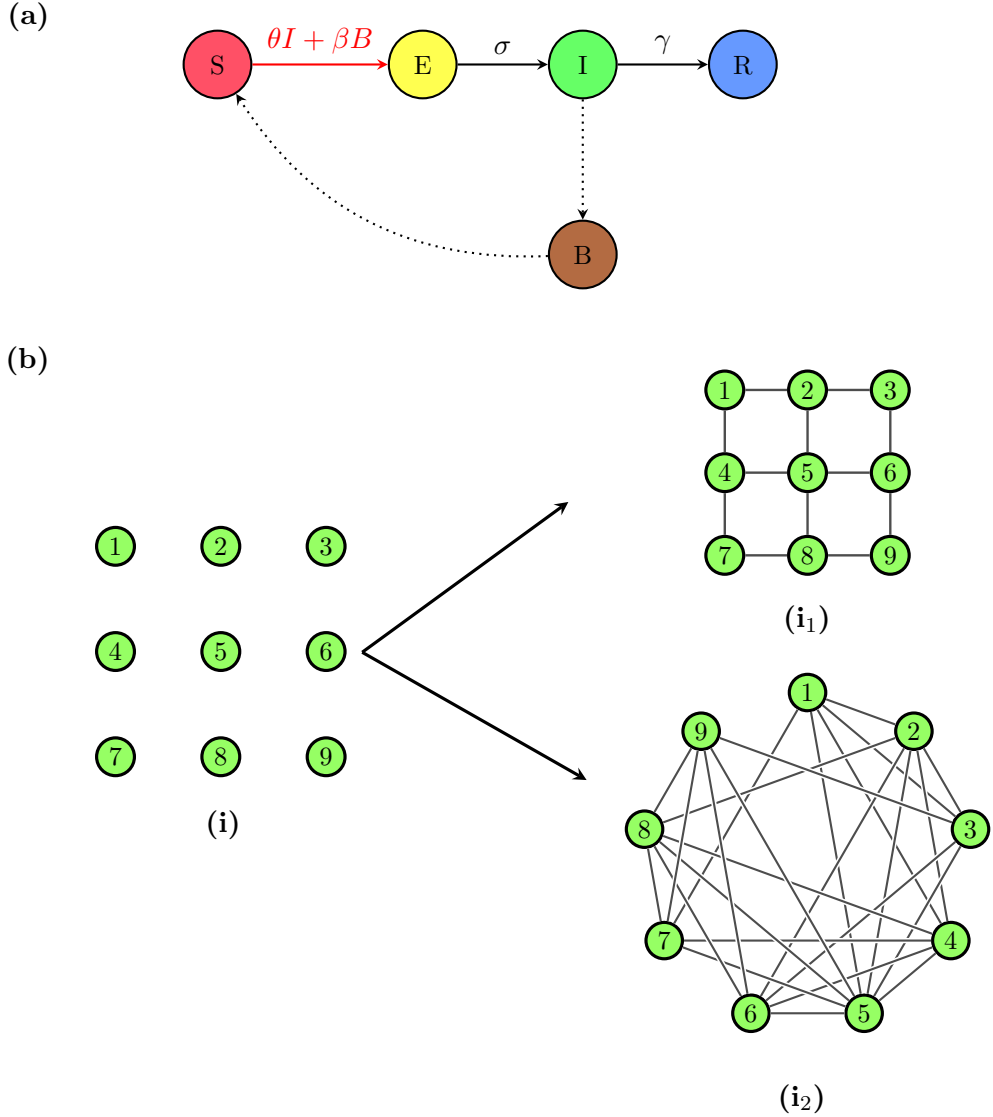
\begin{figure}[H]
\centering

\begin{subfigure}[t]{\textwidth}

\noindent\hspace*{1.8cm}\textbf{(a)}

\vspace{-0.2cm}

\noindent\hspace*{4cm}
\begin{tikzpicture}[
    node distance=0.3cm and 0.3cm,
    compartment/.style={circle, draw=black, minimum size=0.9cm, font=\small, thick, align=center},
    every path/.style={thick, ->, >=stealth}
]

\node[compartment, fill={rgb,255:red,255;green,80;blue,102}] (S) {S};
\node[compartment, fill={rgb,255:red,255;green,255;blue,80}, right=1.8cm of S] (E) {E};
\node[compartment, fill={rgb,255:red,102;green,255;blue,102}, right=1.2cm of E] (I) {I};
\node[compartment, fill={rgb,255:red,102;green,153;blue,255}, right=1.2cm of I] (R) {R};
\node[compartment, fill={rgb,255:red,179;green,107;blue,66}, below=1.6cm of I] (B) {B};

\path (S) edge[above, red] node {$\theta I + \beta B$} (E);
\path (E) edge[above] node {$\sigma$} (I);
\path (I) edge[above] node {$\gamma$} (R);
\path (I) edge[left, dotted] (B);
\path (B) edge[bend left, below, dotted] (S);

\end{tikzpicture}

\end{subfigure}

\vspace{0.6cm}

\begin{subfigure}[t]{\textwidth}
\centering

\begin{tikzpicture}[
  plant/.style={circle,draw=black,very thick,fill={rgb,255:red,150; green,255; blue,100},inner sep=2.2pt},
  edge/.style={line width=0.9pt,draw=black!70},
  arr/.style={->,very thick,>=stealth},
  box/.style={inner sep=0pt}
]
\node[anchor=north west,font=\bfseries] at (-3,3.6) {(b)};

\node[box] (F) at (0,-1) {%
\begin{tikzpicture}[scale=1.4]
\node[anchor=north west,font=\bfseries] at (0.85,-0.5) {(i)};
\foreach \i/\x/\y in {1/0/2, 2/1/2, 3/2/2,
                      4/0/1, 5/1/1, 6/2/1,
                      7/0/0, 8/1/0, 9/2/0}{
  \node[plant] (f\i) at (\x,\y) {\small \i};
}

\end{tikzpicture}
};

\node[box] (H) at (7.8,1.4) {%
\begin{tikzpicture}[scale=1.1]
\node[anchor=north west,font=\bfseries] at (0.7,-0.6) {($\text{i}_1$)};

\foreach \i/\x/\y in {1/0/2, 2/1/2, 3/2/2,
                      4/0/1, 5/1/1, 6/2/1,
                      7/0/0, 8/1/0, 9/2/0}{
  \node[plant] (f\i) at (\x,\y) {\small \i};
}
\draw[edge] (f1)--(f2); \draw[edge] (f2)--(f3);
\draw[edge] (f4)--(f5); \draw[edge] (f5)--(f6);
\draw[edge] (f7)--(f8); \draw[edge] (f8)--(f9);
\draw[edge] (f1)--(f4); \draw[edge] (f4)--(f7);
\draw[edge] (f2)--(f5); \draw[edge] (f5)--(f8);
\draw[edge] (f3)--(f6); \draw[edge] (f6)--(f9);
\end{tikzpicture}
};

\node[box] (C) at (7.8,-3.8) {%
\begin{tikzpicture}[
  scale=1,
  plant/.style={circle,draw=black,very thick,
    fill={rgb,255:red,150; green,255; blue,100},inner sep=2.0pt},
  edge/.style={
    line width=0.8pt,draw=black!70,
    preaction={draw=white,line width=2pt}
  }
]
\node[anchor=north west,font=\bfseries] at (-0.4,-3) {($\text{i}_2$)};
\foreach \i/\ang in {1/90,2/50,3/10,4/330,5/290,6/250,7/210,8/170,9/130}{
  \node[plant] (b\i) at (\ang:2.2) {\small \i};
}
\draw[edge] (b1)--(b2); \draw[edge] (b1)--(b3); \draw[edge] (b1)--(b4);
\draw[edge] (b1)--(b5); \draw[edge] (b1)--(b7);
\draw[edge] (b2)--(b3); \draw[edge] (b2)--(b4); \draw[edge] (b2)--(b5);
\draw[edge] (b2)--(b6); \draw[edge] (b2)--(b8);
\draw[edge] (b3)--(b5); \draw[edge] (b3)--(b6); \draw[edge] (b3)--(b9);
\draw[edge] (b4)--(b5); \draw[edge] (b4)--(b7); \draw[edge] (b4)--(b6); \draw[edge] (b4)--(b8);
\draw[edge] (b5)--(b6); \draw[edge] (b5)--(b7); \draw[edge] (b5)--(b8); \draw[edge] (b5)--(b9);
\draw[edge] (b6)--(b8); \draw[edge] (b6)--(b9);
\draw[edge] (b7)--(b8); \draw[edge] (b7)--(b9);
\draw[edge] (b8)--(b9);
\end{tikzpicture}
};

\draw[arr] ($(F.east)+(0.2,0.4)$) -- ($(H.west)+(-1.8,0)$);
\draw[arr] ($(F.east)+(0.2,0.4)$) -- ($(C.west)+(-0.5,1.5)$);

\end{tikzpicture}

\end{subfigure}

\caption{Illustration of the SEIRB modeling framework and network representations.
\textbf{(a)} Diagram of the SEIRB transmission structure, including direct and soil-mediated infection pathways.
\textbf{(b)} Effect of the distance threshold \(d\) on the plant contact network: \textbf{(i)} hypothetical soybean field layout; \textbf{($\text{i}_1$)} the network structure induced by a distance threshold of $d=1$, corresponds to nearest-neighbor connectivity on the lattice; and \textbf{($\text{i}_2$)} the network structure induced by a distance threshold of $d=2$, resulting in a denser pattern of connectivity.}
\label{fig:seirb_and_network}
\end{figure}

\subsection{Proposed Network-Based SEIRB Model}\label{sec:Network_model}

In network-based epidemic models, specify which individuals can transmit to one another, allowing transmission to be constrained by spatial proximity or biological contact structure~\cite{KeelingEames2005,Danon2011}. This distinction is essential for plant disease systems, where infections typically spread only between nearby hosts and where local clustering can substantially influence epidemic outcomes~\cite{ShawPautasso2014,Gilligan2008}.

To illustrate the difference between homogeneous mixing and spatially restricted contacts, 
consider a schematic graph $G=(V,E)$ with nine plants labeled 
$V=\{1,2,\ldots,9\}$, as shown in Fig.~\ref{fig:seirb_and_network}b(i). 
Under homogeneous mixing, each plant can transmit to every other plant, 
corresponding to the complete graph with edge set $
E_1=\{(i,j): 1 \le i < j \le 9\},$
so that $|E_1|=\binom{9}{2}=36$. 
In contrast, under the network model, each plant can transmit only to its immediate neighbors through a sparser contact graph. 
For example, the contact structure shown in Fig.~\ref{fig:seirb_and_network}b($\text{i}_1$) corresponds to the edge set $
E_2=\{(1,2),(1,4),(2,3),(2,5),(3,6),(4,5),(4,7),
(5,6),(5,8),(6,9),(7,8),(8,9)\},$
with $|E_2|=12$.

\subsubsection{Real Field network based on distance threshold}\label{sec:Real_field_distance}

We unravel the contact network structure of the soybean field based on the real field layout described in~\cite{Mengistu2014FLS}. In particular, the field consists of six subplots arranged as three till plots (top row) and three non-till plots (bottom row). The field layout follows a randomized split-split plot design in which each subplot has four rows of length \(6\) m, and rows are spaced at \(76.2\) cm. Soybeans are planted at \(12\) seeds per meter, giving \(72\) plants per row and a total of \(N=1728\) plants (Fig.~\ref{fig:field_layout}). 

We assign each plant a fixed coordinate \(\boldsymbol{x}_i\in\mathbb{R}^2\) based on its row and within-row position so that the geometry of every subplot, as well as the alleys between subplots, is preserved. Each plant \(i = 1,\dots,N\) is assigned to a subplot \(p_k(i) \text{ for } k \in \{1,\dots,6\}\) and to a management practice \(P(i) \in \{\text{till}, \text{non-till}\}\), with three subplots under tillage and three under non-till management, as shown in Fig.~\ref{fig:field_layout}. These plot and practice labels are used in the SEIRB model to define plot-specific soil reservoirs and practice-specific transmission and decay parameters, as well as management interventions such as plant removal.

Using the plant coordinates, we compute Euclidean distances
\(
d(i,j) = \lVert \boldsymbol{x}_i - \boldsymbol{x}_j \rVert_2
\)
for all pairs of plants. Plant-to-plant contacts are represented by an undirected proximity graph \(G = (V,E)\), where the vertex set \(V\) corresponds to individual plants and an edge \((i,j) \in E\) is present if \(0 < d(i,j) \le d\). The resulting contact structure is represented by an adjacency matrix \(A = (A_{ij})\), where
\[
A_{ij} =
\begin{cases}
1, & \text{if } 0 < d(i,j) \le d, \\
0, & \text{otherwise},
\end{cases}
\qquad
\mathcal{N}(i) = \{ j : A_{ij} = 1 \}
\]
and \(\mathcal{N}(i)\) denotes the set of neighboring plants that can directly transmit infection to plant \(i\).

Because the coordinates reflect the true physical spacing, edges across subplot boundaries appear only when \(d\) is large enough to bridge the alleys. The single parameter \(d\) allows us to tune the network from very local to more strongly connected regimes. The effect of the distance threshold on connectivity is illustrated for a hypothetical field of nine plants (Fig.~\ref{fig:seirb_and_network}b). As \(d\) increases, the network becomes progressively denser as additional diagonal and longer-range connections are added among neighboring plants. For demonstration purposes, Fig.~\ref{fig:seirb_and_network}b illustrates a hypothetical field represented as a graph, where plants are positioned on a $3\times3$ grid with unit spacing. Panel~($\text{i}$) depicts the base layout, while panel~($\text{i}_1$) corresponds to \(d = 1\), where only horizontal and vertical nearest neighbors are connected, forming a four-neighbor grid. For larger values of \(d\) (e.g., \(d = 2\); panel~($\text{i}_2$), displayed in a circular layout for readability), additional longer-range links emerge, producing a much denser network.

\begin{figure}[H]

  \centering
  \vspace{-0.1in}
  \includegraphics[width=0.88\linewidth]{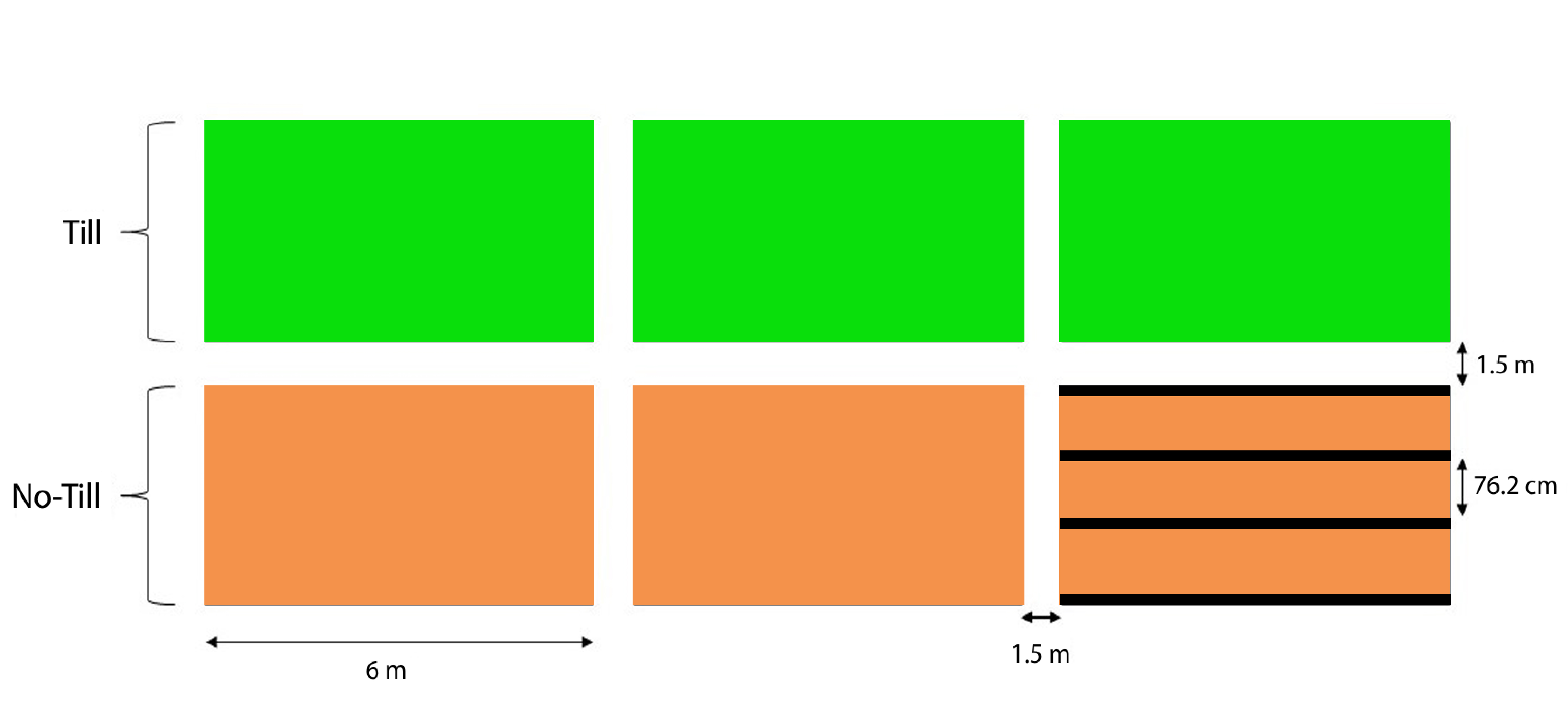}
  \vspace{0.1in}
  \begingroup
  \centering
  \begin{minipage}{0.85\textwidth}
  \justifying
  \caption[Real field layout used to construct the spatial network]{Real field layout used to construct the spatial network~\cite{Mengistu2014FLS}.}
 \label{fig:field_layout}
  \end{minipage}
   
  \par
  \endgroup
 
\end{figure}

\subsubsection{Proposed SEIRB Network Model}

Building upon the compartmental model in section~\ref{sec:priliminaries} and network structure as in section~\ref{sec:Real_field_distance}, this section presents our proposed SEIRB network model for FLS. The model couples distance-limited plant-to-plant transmission, defined through a distance-threshold contact graph derived from measured field geometry, with a plot-level environmental reservoir representing soil-borne inoculum. Management practices enter the model through practice-specific parameters governing exposure to and decay of environmental inoculum. We now describe the mathematical formulation of the spatial SEIRB network model for FLS. We consider a compartmental model with mutually exclusive disease states.  At each time \(t\), plant \(i\) occupies exactly one of the four epidemiological compartments, corresponding to the susceptible \((S_i)\), exposed \((E_i)\), infectious \((I_i)\), or removed \((R_i)\) states. This is represented by binary state variables

\[
S_i(t),\, E_i(t),\, I_i(t),\, R_i(t) \in \{0,1\},
\] which satisfy the constraint \[
S_i(t) + E_i(t) + I_i(t) + R_i(t) = 1 \qquad \forall i, i \in \{1,2,..,N\} \text{ and } t\in \{1,2,...,T\}.
\]

Each subplot \(p(i)\) contains a nonnegative soil reservoir \(B_p(t) \ge 0\), which accumulates inoculum from infectious plants and serves as a local source of new infections. The instantaneous infection hazard acting on plant \(i\) at time \(t\) is given by
\[
\lambda_i(t)
= \theta \sum_{j \in \mathcal{N}(i)} A_{ij}\, I_j(t)
+ \beta_{P(i)}\, B_{p(i)}(t),
\]
where \(\theta\) denotes the per-contact plant-to-plant transmission rate, the summation accounts for infectious neighboring plants \(j \in \mathcal{N}(i)\) connected to plant \(i\) in the contact network, and \(A_{ij}\) is the corresponding adjacency matrix entry. The second term represents soil-mediated transmission, where \(B_{p(i)}(t)\) denotes the level of infectious inoculum in the soil reservoir associated with subplot \(p(i)\), and \(\beta_{P(i)}\) is a management-specific soil-to-plant transmission coefficient that depends on whether plant \(i\) is grown under till or non-till conditions. 

Management practices modify the soil-mediated transmission pathway through tillage-specific multiplicative factors. Accordingly, the soil-to-plant transmission coefficient for plant $i$ is defined as
\[
\beta_{P(i)} =
\begin{cases}
\beta_{\text{till}}, & \text{if } P(i) = \text{till}, \\
\beta_{\text{non}},  & \text{if } P(i) = \text{non-till}.
\end{cases}
\]
Under tillage management, the soil-to-plant transmission coefficient and soil inoculum decay rate are modeled as
\[
\beta_{\text{till}} = \rho_{\beta}\,\beta_{\text{non}},
\qquad
\tau_{\text{till}} = \rho_{\tau}\,\tau_{\text{non}},
\]
where $\beta_{\text{non}}$ and $\tau_{\text{non}}$ denote the baseline soil exposure and decay parameters under non-till management, and $\rho_{\beta}$ and $\rho_{\tau}$ are dimensionless multipliers that quantify the proportional effect of tillage on transmission and decay, respectively.

The soil inoculum reservoir \(B_{p}(t)\) associated with subplot \(p\) evolves according to
\[
\dot B_{p}(t)
= r_{0}\,B_{p}(t)\!\left(1-\frac{B_{p}(t)}{k_{0}}\right)
- \tau_{P(p)}\,B_{p}(t)
+ \xi \sum_{j:\,p(j)=p} I_j(t),
\]
where the first term represents logistic growth of soil-borne inoculum with intrinsic growth rate \(r\) and saturation at the carrying capacity \(k\), the second term represents practice-specific decay or removal of inoculum at rate \(\tau_{P(p)}\), and the third term represents shedding of inoculum from infectious plants located in subplot \(p\). Here, \(\xi\) is the per-infectious-plant shedding rate, and the summation is taken over all plants \(j\) assigned to subplot \(p\).

The discrete-time Euler update with one-day steps is
\[
B_p(t{+}1)
= B_p(t)
+ r_{0}\,B_p(t)\!\left(1-\frac{B_p(t)}{k_{0}}\right)
- \tau_{P(p)}\,B_p(t)
+ \xi\,I_p(t),
\]
with nonnegativity enforced by \(B_p(t{+}1)=\max\{0,B_p(t{+}1)\}\). 

From time \(t\) to \(t{+}1\), a susceptible plant \(i\) becomes exposed with probability,
\[
p_i(t{+}1) = 1 - \exp\!\bigl[-\lambda_i(t)\bigr],
\]
where \(\lambda_i(t)\) is the instantaneous infection hazard at time \(t\).
Progression from the exposed to the infectious state and removal from the infectious state occur with constant daily transition probabilities,
\[
p_{E\to I} = 1 - \exp(-\sigma_{0}),
\qquad
p_{I\to R} = 1 - \exp(-\gamma_{0}),
\]
which are applied uniformly to all exposed and infectious plants, respectively.
All transition probabilities are evaluated using the system state at time \(t\), and state updates are performed synchronously to obtain the configuration at time \(t{+}1\).

All fixed biological parameters $(\sigma_{0}, \gamma_{0}, r_{0}, k_{0}, B(0))$ were adopted from~\cite{MBE2024FLS}, and the initial soil inoculum level was set identically for both till and non-till subplots, with $B_p(0) = B(0)$.
The remaining transmission and management parameters are treated as unknown and estimated from the observed epidemic data, and are collected in the parameter vector
\(
\boldsymbol{\vartheta}
= \big(\theta,\ \beta_{\mathrm{non}},\ \rho_{\beta},\ \xi,\ \tau_{\mathrm{non}},\ \rho_{\tau},\ d\big).
\)

\section{Results} \label{sec:results}
The Results section presents the main findings of this study. In particular, Section~\ref{sec:Res_data} evaluates the fit of the proposed model and examines the influence of initial infection geometry. Section~\ref{sec:res_tillage} studies the effects of tillage practices using parameter multipliers and cumulative disease burden. Finally, Section~\ref{sec:res_rogue} assesses the effectiveness of roguing intervention strategies on the real-field network.

\subsection{Data Analysis}\label{sec:Res_data}

\subsubsection{Parameter Estimation via approximate Bayesian computation (ABC)}\label{sec:res_parameters}

Approximate Bayesian computation (ABC) provides a likelihood-free framework for estimating the parameters of our stochastic, spatially explicit SEIRB model. Because the model generates path-dependent epidemic trajectories, the likelihood function is analytically intractable. ABC overcomes this difficulty by repeatedly drawing candidate parameter values from the prior, simulating epidemic outcomes, and retaining only those parameters whose simulated trajectories closely match the observed data according to a chosen discrepancy measure and tolerance threshold. As the tolerance is tightened, the accepted parameters increasingly approximate draws from the true posterior distribution. For reference, the basic rejection-based ABC procedure is summarized as in the pseudocode in Algorithm below. Many variants of ABC have been developed to improve convergence performance \cite{Franks01102020,Pudlo2016,Raynal2019}. In this study, we adopt the replenishment ABC (RABC) method of Drovandi and Pettitt~\cite{DrovandiPettitt2011_SMC_ABC}.

\renewcommand{\thealgorithm}{}
\begin{algorithm}[H]
\caption*{\textbf{Pseudocode:} ABC Rejection Algorithm}
\begin{algorithmic}[1]
 \State Sample \(\theta^{(i)} \sim \pi(\theta)\) from the prior.
 \State Simulate data \(y^{(i)} \sim f(y \mid \theta^{(i)})\).
 \State Compute the discrepancy \(d = d(S(y^{(i)}), S(y_{\mathrm{obs}}))\).
 \If{$d \le \varepsilon$} \State accept \(\theta^{(i)}\)
 \Else \State reject
 \EndIf
\end{algorithmic}
\end{algorithm}

We calibrate the model using the empirical disease-severity dataset summarized in Table~\ref{tab:fls_severity}. Because the initial infection locations were not recorded, we randomly distributed the 52 infections across the mapped field layout (Fig.~\ref{fig:fields}a) and used this configuration as the default initialization for all model runs in this study. 

Priors were specified as:
$\theta \sim \mathrm{LogNormal}(-10.98,1.22)$,
$\beta_{\mathrm{non}} \sim \mathrm{LogNormal}(-18.31,1.97)$,
$\xi \sim \mathrm{LogNormal}(4.99,1.82)$,
$\tau_{\mathrm{non}} \sim \mathrm{LogNormal}(-4.18,2.23)$,
$\rho_{\beta}\sim \mathrm{U}(0.1,2.0), \rho_{\tau} \sim \mathrm{U}(0.1,2.0)$, and
$d \sim \mathrm{U}(70,310)$.
These priors were derived from a data-driven pilot search to improve convergence, based on preliminary model runs while retaining sufficient variability for the spatial network model. Further implementation details are available in the accompanying GitHub repository. The distance metric used is the absolute-error distance
$d(I_{\mathrm{sim}}, I_{\mathrm{obs}}) = \sum_{t \in T} \lvert I_{\mathrm{sim}}(t) - I_{\mathrm{obs}}(t) \rvert$,
where $T$ denotes the set of observation times, $I_{\mathrm{obs}}(t)$ is the number of infected plants at time $t$, and $I_{\mathrm{sim}}(t)$ is the corresponding simulated count. Using these priors and the distance metric within RABC, we obtain the 100 best posterior parameter combinations and use them to represent the model parameter estimates.

To assess model fit, we simulated three forward epidemic trajectories for each of 100 posterior parameter samples, yielding 300 trajectories. We retained the 100 with the smallest distance to the observed data and computed the predictive mean and 95\% credible intervals from this subset. As shown in Fig.~\ref{fig:traj_fit_7par}, the predictive mean closely tracks the observed infection counts throughout the season, indicating strong agreement with the field data. The 95\% credible intervals encompass the observed infection counts for most of the season. Together, these results indicate good agreement between the model and the field data.

\begin{table}[ht]
\centering
\caption{FLS disease severity data~\cite{MBE2024FLS}.}
\label{tab:fls_severity}
\small
\begin{tabular}{lcccccccc}
\toprule
\textbf{Days after planting} & 0 & 45 & 50 & 75 & 89 & 96 & 117 & 138 \\
\midrule
\textbf{Disease severity}    & 3\% & 5\% & 6\% & 8\% & 16\% & 21\% & 28\% & 36\% \\
\midrule
\textbf{Infected plants (count)} & 52 & 86 & 104 & 138 & 276 & 362 & 484 & 622 \\
\bottomrule
\end{tabular}
\end{table}
\vspace{0.2in}

\begin{figure}[ht]
  \centering
  \includegraphics[width=0.7\textwidth]{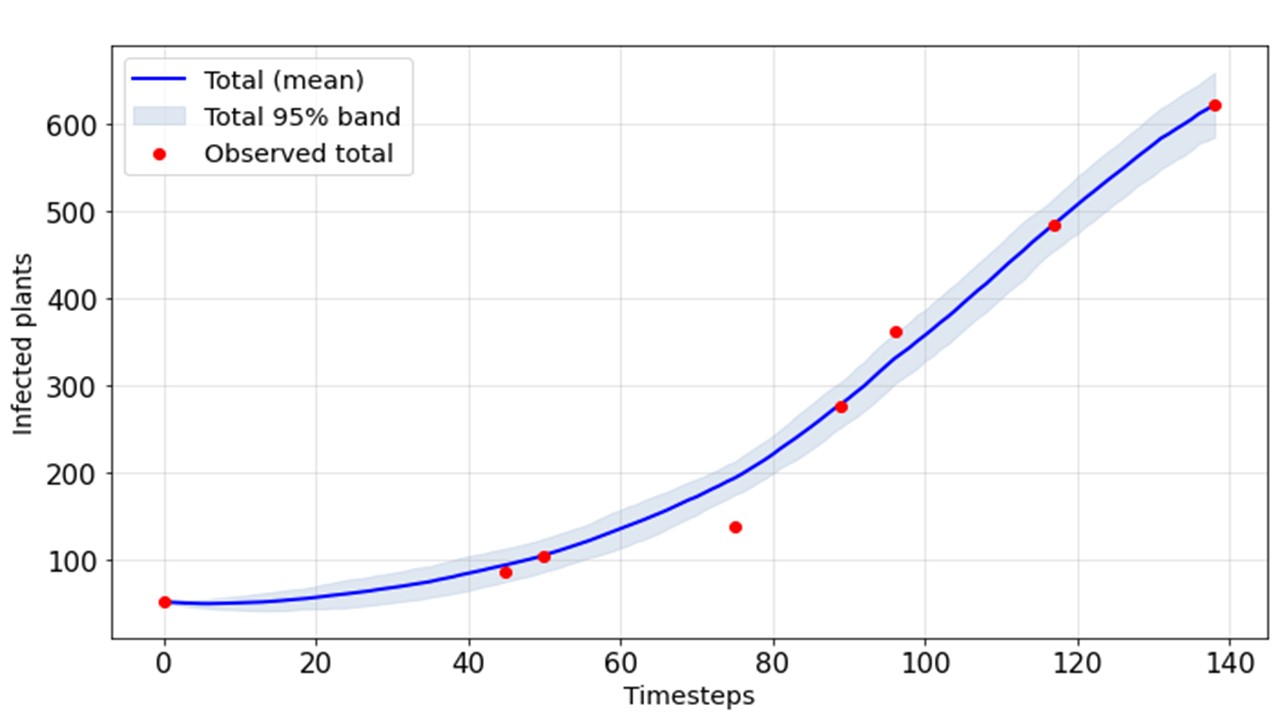} 
\begingroup
\centering 
\begin{minipage}{0.85\textwidth} 
\justifying

\caption{Observed infected-plant counts and posterior predictive trajectories. Solid line: mean of the best-100 trajectories; the shaded ribbon denotes the 95\% credible band. The best-fit model parameter vector is $
\vartheta
=
\left(
\theta,\,
\beta_{\text{non}},\,
\rho_\beta,\,
\xi,\,
\tau_{\text{non}},\,
\rho_\tau,\,
d
\right)
=
\left(
3.90\times 10^{-5},\,
4.24\times 10^{-8},\,
1.45,\,
113.5,\,
0.263,\,
0.881,\,
83.1
\right).$}
\label{fig:traj_fit_7par}
\end{minipage}
\endgroup

\end{figure}
\vspace{-1cm}


\subsubsection{Sensitivity Analysis Across Initial Infection Scenarios}\label{sec:res_sensitivity_initial}

To evaluate the sensitivity of parameter estimates to initial infection geometry, we considered alternative seeding patterns. The random configuration with $|S_0|=52$ used for ABC calibration (Section~\ref{sec:res_parameters}) serves as the baseline for comparison with the clustered and polycentric seeding patterns. The clustered and polycentric arrangements are defined below.

\begin{enumerate}
    \item[{\textit{1.}}] \textit{Cluster}: A dense, compact focus of infection occupying a single region of the field. Operationally, one centroid is selected at a plant location, and $S_0$ is defined as the $52$ nearest plants (equivalently, all plants within a radius $r$ chosen such that $|S_0|=52$). This produces strong spatial autocorrelation and minimal fragmentation.

    \item[{\textit{2.}}] \textit{Polycentric}: Initial infections are distributed around $K\!\ge\!2$ well-separated centroids, subject to a minimum inter-centroid distance. The 52 seeds are partitioned across these regions (approximately balanced $n_k$ with $\sum_{k=1}^K n_k=52$), assigning each infected plant to the closest centroid and ensuring non-overlapping neighborhoods. 
    
\end{enumerate}

\begin{figure}[H]
\centering

\begin{subfigure}{\textwidth}
    \centering
    \includegraphics[width=\textwidth]{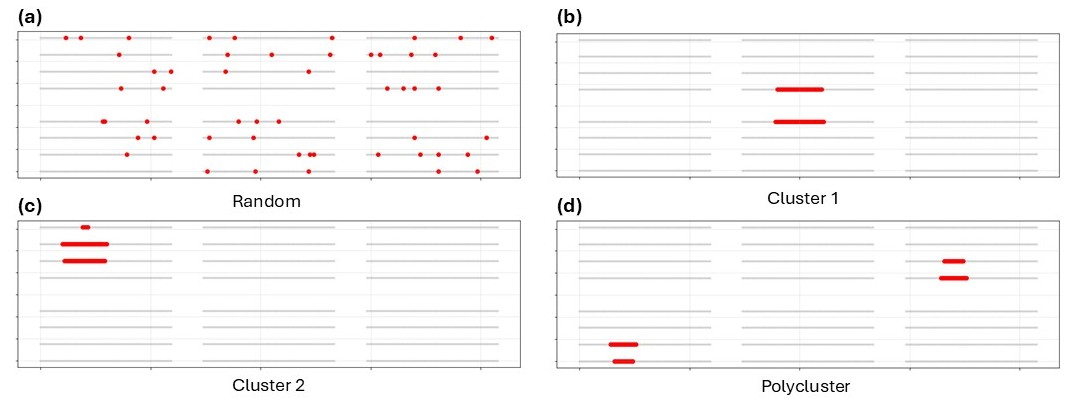}
\end{subfigure}

\begingroup
\centering
\begin{minipage}{0.98\textwidth}
\justifying

\caption{Spatial configurations of the initially infected plants for the four seeding scenarios: \textit{Random}, \textit{Cluster~1}, \textit{Cluster~2}, and \textit{Polycluster}.}
\label{fig:fields}
\end{minipage}
\par
\endgroup
\end{figure}

In our analysis, these constructions were instantiated as three practical scenarios. In \textit{Cluster~1} (Fig.~\ref{fig:fields}b), infections are assigned to the 52 plants nearest to a centroid located near the field center, forming a compact, interior outbreak that tests sensitivity to centrally initiated epidemics.  In \textit{Cluster~2}(Fig.~\ref{fig:fields}c), all 52 initial infections are placed on one side of the field, typically within a single subplot, generating a strongly asymmetric starting condition and localized early amplification.  In the \textit{Polycluster} scenario (Fig.~\ref{fig:fields}d), we adopt the bicentric case ($K=2$), positioning two non-overlapping clusters in distinct subplots (one till, one non-till) to assess cross-plot transmission and management effects. Each cluster is generated by selecting a centroid and assigning the nearest plants, with neighborhoods constrained to avoid overlap. These structured seeding patterns allow us to investigate how early spatial heterogeneity shapes epidemic timing, cross-plot spread, and the relative contributions of direct and soil-mediated pathways under a fixed parameter set.

The sensitivity of the calibrated model parameters to different initial-infection scenarios is illustrated by comparing the marginal posterior distributions of the seven parameters across the four spatial configurations (Fig.~\ref{fig:posterior_boxplots}). The most pronounced scenario-dependent differences occur in the spatial distance threshold $d$, with the Random scenario concentrating at smaller values, while the clustered and Polycluster scenarios favor substantially larger spatial scales. Soil-related parameters also vary across scenarios: the shedding rate $\xi$ tends to be lower in the Random scenario and higher in clustered configurations, whereas the baseline soil exposure parameter $\beta_{\mathrm{non}}$ shows the opposite pattern. In contrast, the plant-to-plant transmission rate $\theta$, the soil decay parameter $\tau_{\mathrm{non}}$, and the tillage multipliers $(\rho_{\beta}, \rho_{\tau})$ show comparatively smaller scenario-dependent differences, with substantial overlap in their posterior distributions. Overall, the figure shows that different initial infection geometries lead to systematic differences in the inferred values of spatial and soil-related parameters, even when all scenarios fit the same observed infection data. This finding highlights the importance of collecting precise initial infection locations in future field studies, as such data would enhance the robustness of model parameter estimation.

\begin{figure}[H]
\centering

\begin{subfigure}{\textwidth}
    \centering
    \includegraphics[width=\textwidth]{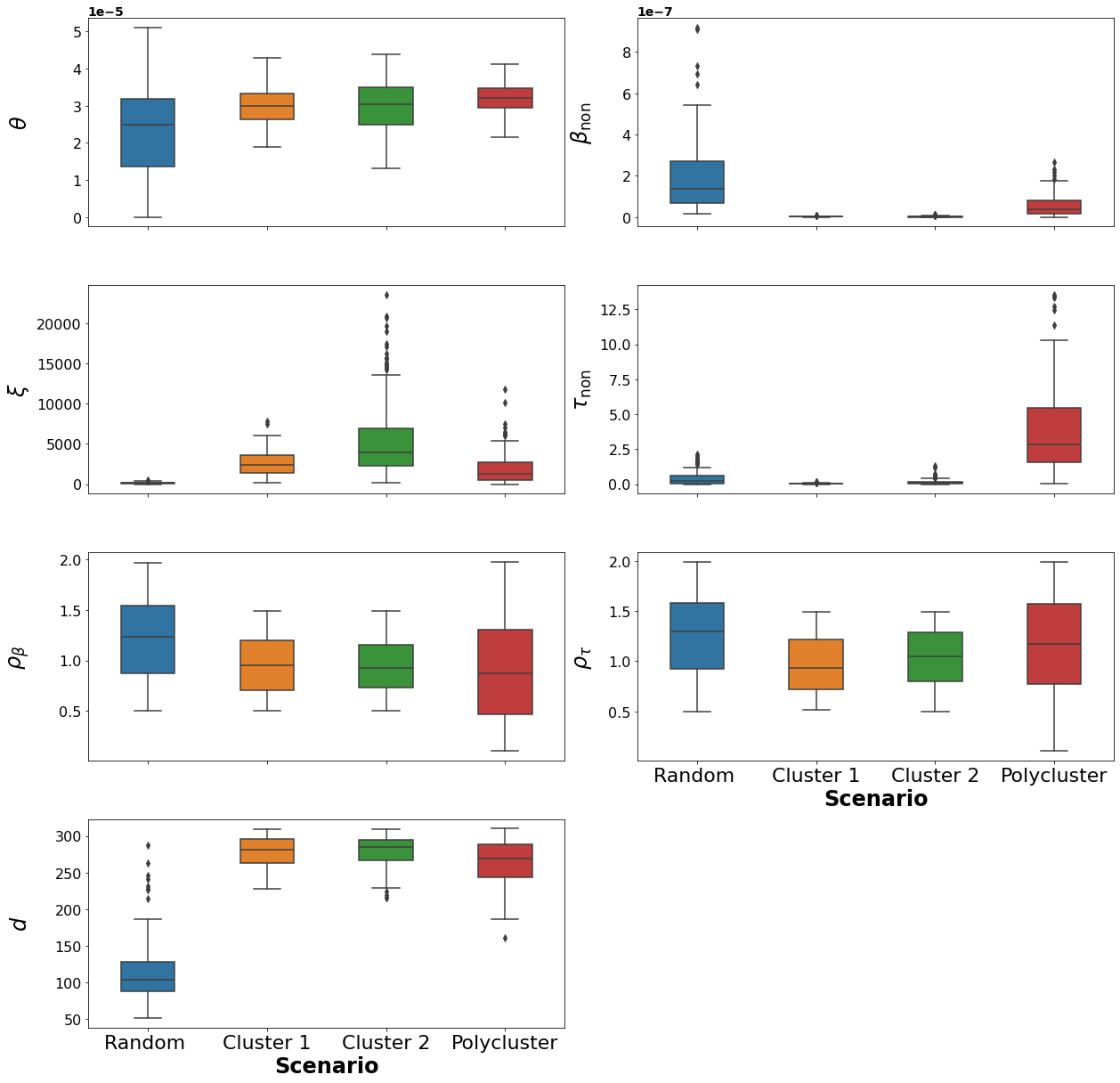}
\end{subfigure}

\begingroup
\centering
\begin{minipage}{0.98\textwidth}
\justifying

\caption{Posterior sensitivity to initial seeding, shown as boxplots of marginal posterior samples for $\vartheta=(\theta,\beta_{\mathrm{non}},\xi,\tau_{\mathrm{non}},\rho_\beta,\rho_\tau,d)$ under the four seeding scenarios.
}
\label{fig:posterior_boxplots}
\end{minipage}
\par
\endgroup
\end{figure}

\subsection{Assessing the Effectiveness of Tillage Practices}\label{sec:res_tillage}

\subsubsection{Tillage effects on fungus spread and decay}\label{sec:tillage_multipliers}

We evaluated whether tillage modifies the soil-mediated transmission pathway through two practice-specific multipliers: (i) an exposure multiplier \(\rho_{\beta}\), which tests whether tillage reduces soil-driven infection pressure, and (ii) a decay multiplier \(\rho_{\tau}\), which tests whether tillage accelerates residue decay.

We set the two hypotheses as follows,
\begin{equation}\label{eq:test1}
H_{01}:\rho_{\beta}=1 \text{ vs.} \ H_{11}:\rho_{\beta}<1, 
\end{equation}
\begin{equation}\label{eq:test2}
H_{02}:\rho_{\tau}=1 \text{ vs.} \ H_{12}:\rho_{\tau}>1, 
\end{equation}

For the testing problem \eqref{eq:test1}, if \(H_{01}\) is rejected in favor of \(H_{11}\), then there is evidence that tillage lowers soil exposure (\(\rho_{\beta}<1\)).

For the testing problem \eqref{eq:test2}, rejecting \(H_{02}\) in favor of \(H_{12}\) provides evidence that tillage raises the decay rate (\(\rho_{\tau}>1\)).

Table \ref{tab:onesided_ci_random} represents the 95\% CI and $p$-value for the above two tests. Based on this result, we fail to reject $H_{01}$ and $H_{02}$, and there is no statistically credible evidence that tillage reduces soil-driven exposure or sufficiently speeds decay under the current data and calibration.

\begin{table}[ht]

\centering
\caption{One-sided credible intervals and $p$-values for the exposure and decay ratios.}
\label{tab:onesided_ci_random}

\begin{tabular}{l l  c  c}
\toprule
Quantity & Alternative &  CI$_{0.95}$ &  $p$-value \\
\midrule
$\rho_\beta$ (exposure ratio) & $H_{11}:\rho_\beta<1$ &  $(-\infty,\,1.8756)$ &  0.645 \\
$\rho_\tau$ (decay ratio)     & $H_{12}:\rho_\tau>1$ &  $(0.5638,\,\infty)$ &  0.315 \\
\bottomrule
\end{tabular}

\end{table}

\begin{figure}[ht]
  \centering
  \includegraphics[width=0.8\textwidth]{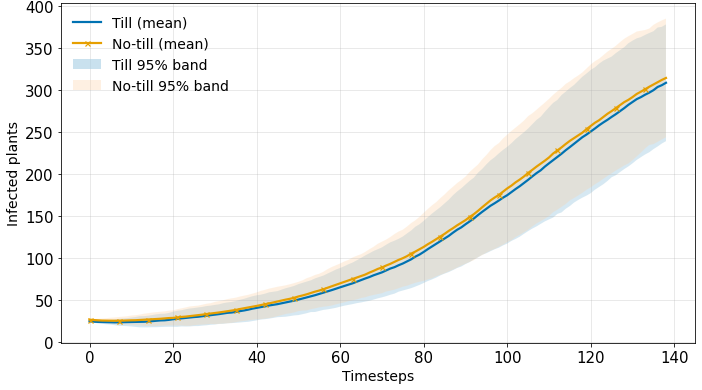}

  \caption{Till vs.\ non-till infection trajectories}
  
  \label{fig:till_notill_infected}
\end{figure}

\subsubsection{Tillage effects on disease severity
}\label{sec:AUDPC}

To further understand how tillage practices influence disease severity over the course of the growing season, we quantify epidemic intensity using the \textit{Area Under the Disease Progress Curve (AUDPC)}. The AUDPC is a standard metric to measure disease severity in plant pathology~\cite{Mengistu2014FLS}. At each time step $t$, the number of infected plants in a given management practice (till or non-till) is recorded. Dividing this value by the total number of plants in that practice yields the disease severity, which represents the proportion of plants infected at time $t$, ranging between 0 (no infection) and 1 (all plants infected). The AUDPC is then obtained by integrating the severity curve across the epidemic timeline using the trapezoidal rule:
\[
\mathrm{AUDPC} = \sum_{i=1}^{T-1} \frac{\mathrm{Severity}(t_i) + \mathrm{Severity}(t_{i+1})}{2} \cdot (t_{i+1} - t_i),
\]
where $T$ is the total number of time steps.

Each simulation infection count over time was recorded for both till $I_{\mathrm{till}}(t)$ and non-till $I_{\mathrm{non\mbox{-}till}}(t)$ plots. For each practice, counts were converted to severity (fraction infected) by dividing by the number of plants in that practice, and the trapezoidal rule was applied across all 139 times teps to compute the AUDPC.

To evaluate whether one practice tends to produce greater cumulative severity, we used the one-sided binomial test. Let us denote $p=\Pr\!\big(\mathrm{AUDPC}_{\mathrm{till}}<\mathrm{AUDPC}_{\mathrm{non-till}}\big)$. The hypotheses were specified as
\begin{equation}\label{eq:test3}
    H_{03}: p = 0.5  \text{ vs. }  H_{13}: p > 0.5
\end{equation}

For the test \eqref{eq:test3}, we cannot reject $H_{03}$ since the one-sided $95\%$ confidence interval for the parameter is $[0.404,1.000]$ that includes the value $0.5$ within the interval with the $p$-value $=0.6178$. Also, Figure \ref{fig:till_notill_infected} confirms that there is no significant deviation between till and non-till practices. Hence, we can conclude that, under the current model and parameter draws, there is no statistically significant difference in cumulative disease burden between till and non-till. Therefore, tillage showed no statistically significant reduction in disease burden under the tested conditions.

\subsection{Roguing Intervention}\label{sec:res_rogue}

This section examines roguing as a practical disease-management strategy for FLS, focusing on how the timing, frequency, and targeting of plant removal influence epidemic outcomes across the entire soybean field. Soybean moves from vegetative (V) to reproductive (R) stages; as the canopy closes and humidity rises, foliar pathogens intensify. An \emph{early} window near day~35 offers leverage because plant-to-plant contacts are fewer and the soil inoculum reservoir is still small~\cite{Nleya2019SoybeanStages}. Progressively \emph{later} windows face a denser contact network and more accumulated inoculum, so removals give diminishing returns even if more plants are taken out. We evaluated \emph{late roguing} starting at 42~days after planting (DAP), a stage when rows remain accessible and infections are still localized. Roguing at this time can eliminate nascent foci before widespread flowering.

Using the proposed model, we evaluated plant removal (roguing) as a disease management intervention. All simulations used the same parameter set and the same fixed initial condition $S_0$ with $|S_0|=52$ infected plants, ensuring that differences among scenarios arise solely from the timing, strategy, and frequency of removal. Two strategies were considered: random removal of infected plants and targeted removal of infected plants with the highest degree in the proximity graph. Both approaches were implemented at equal total effort, with $35$ ($2\%$ total removal) plants removed per season. Each strategy was examined under two start times, an early intervention at day~35 and a late intervention at day~42, and under three removal frequencies, with events occurring every $\Delta t=\{1,3,7\}$ days.

Roguing reduces transmission through both direct and soil-mediated pathways by lowering the number of infectious plants and limiting future shedding into the environmental reservoir. As a result, the timing of roguing strongly influences its impact. In all configurations, early interventions consistently produced smaller epidemic peaks and more healthy plants by harvest than interventions beginning only a week later. Higher-frequency removal amplified these benefits: daily roguing ($\Delta t=1$) generated the greatest overall reduction in epidemic intensity, while weekly schedules still produced measurable improvements relative to the no-roguing baseline. Strategy also played a decisive role. Targeted removal was uniformly more effective than random removal, reflecting the advantage of eliminating highly connected plants that act as transmission hubs within the proximity network.

Figure~\ref{fig:rogue_six} illustrates the epidemic trajectories for each combination of timing, frequency, and strategy, and Table~\ref{tab:rogue_core} reports the corresponding peak sizes, peak timings, and end of season healthy plant counts. The strongest performing intervention, which is early daily targeted roguing, resulted in $757\pm57$ healthy plants at day~138, a substantial improvement over the $655\pm38$ healthy plants produced by the late targeted strategy and dramatically higher than the $319\pm24$ obtained with no roguing. In relative terms, early targeted roguing produced approximately 15.6\% more healthy plants than its late counterpart, 26\% more than late random roguing, and more than double the number produced in the absence of any removal. In contrast, the weakest intervention, which is late weekly random roguing, offered only modest gains, with $601\pm48$ healthy plants only slightly above the no roguing outcome. Across all scenarios, the timing of the epidemic peak remained largely unchanged, indicating that roguing primarily reduces peak magnitude rather than delaying its occurrence.

\begin{table}[H]
\caption{Mean $\pm$ SD epidemic outcomes by roguing frequency ($\Delta t = 1, 3, 7$ days), intervention start day (early = 35, late = 42), and roguing strategy (none, random, targeted).}
\label{tab:rogue_core}
\resizebox{\textwidth}{!}{%
\begin{tabular}{c c l c c c}
\toprule
$\boldsymbol{\Delta t}$ & \textbf{Start} & \textbf{Roguing Strategy} & \textbf{Peak infected (plants)} &
\textbf{Peak Time(days)} &  \textbf{Healthy} \\
\midrule
\multirow{6}{*}{1}
 & 35 & No Roguing         & $709 \pm 29$ & $73 \pm 5$ &  $319 \pm 24$ \\
 & 35 & Random Roguing     & $543 \pm 25$ & $50 \pm 5$ &  $637 \pm 34$ \\
 & 35 & Targeted Roguing   & $519 \pm 39$ & $50 \pm 3$  & $757 \pm 57$ \\
 & 42 & No Roguing         & $709 \pm 29$ & $73 \pm 5$ & $319 \pm 24$ \\
 & 42 & Random Roguing     & $608 \pm 23$ & $51 \pm 2$ & $584 \pm 33$ \\
 & 42 & Targeted Roguing   & $597 \pm 33$ & $53 \pm 3$ & $676 \pm 57$ \\
\midrule
\multirow{6}{*}{3}
 & 35 & No Roguing         & $709 \pm 29$ & $73 \pm 5$ &  $319 \pm 24$ \\
 & 35 & Random Roguing     & $541 \pm 31$ & $50 \pm 3$ &  $659 \pm 49$ \\
 & 35 & Targeted Roguing   & $547 \pm 25$ & $50 \pm 4$ &  $710 \pm 47$ \\
 & 42 & No Roguing         & $709 \pm 29$ & $73 \pm 5$ & $319 \pm 24$ \\
 & 42 & Random Roguing     & $608 \pm 30$ & $53 \pm 3$ & $596 \pm 32$ \\
 & 42 & Targeted Roguing   & $597 \pm 41$ & $54 \pm 4$ & $671 \pm 57$ \\
\midrule
\multirow{6}{*}{7}
 & 35 & No Roguing         & $709 \pm 29$ & $73 \pm 5$ & $319 \pm 24$ \\
 & 35 & Random Roguing     & $545 \pm 39$ & $52 \pm 4$ & $652 \pm 67$ \\
 & 35 & Targeted Roguing   & $526 \pm 32$ & $52 \pm 5$ & $725 \pm 51$ \\
 & 42 & No Roguing         & $709 \pm 29$ & $73 \pm 5$ & $319 \pm 24$ \\
 & 42 & Random Roguing     & $606 \pm 31$ & $53 \pm 3$ & $601 \pm 48$ \\
 & 42 & Targeted Roguing   & $613 \pm 27$ & $54 \pm 3$ & $655 \pm 38$ \\
\bottomrule
\end{tabular}%
}

\end{table}

\begin{figure}[H]
  \centering
  \includegraphics[width=1\textwidth]{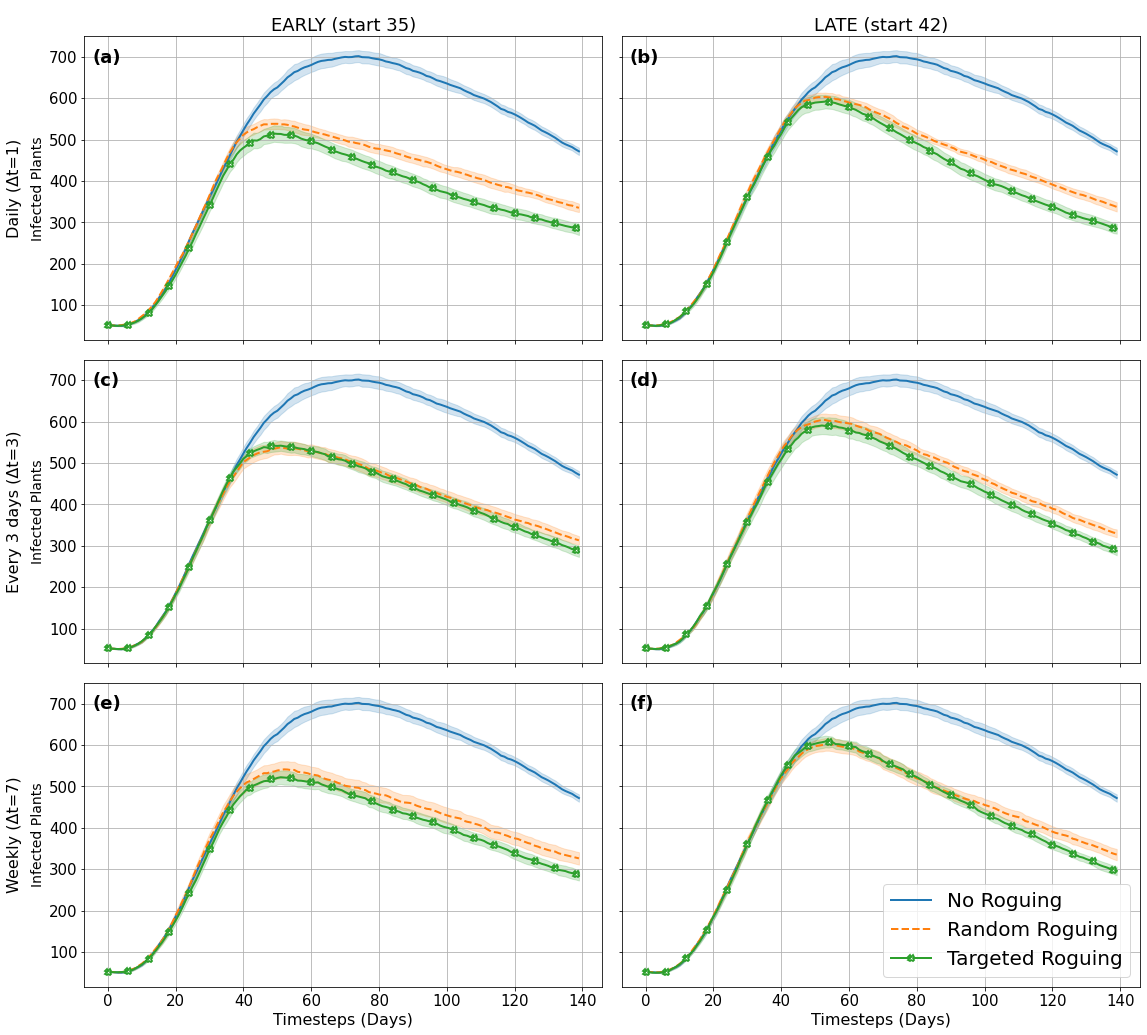}
\begingroup
\centering
\begin{minipage}{0.85\textwidth}
\justifying
 \caption{Timing (early vs.\ late ) and roguing interval effects on infections. Early (left column; start $=35$~d) vs.\ late (right column; start $=42$~d) roguing at three roguing frequency: daily ($\Delta t{=}1$; top), every 3~days ($\Delta t{=}3$; middle), and weekly ($\Delta t{=}7$; bottom). Curves represent replicate means with 95\% CIs; strategies include no roguing (blue, solid), random (orange, dashed), and targeted (green, dotted).}
 \label{fig:rogue_six}
\end{minipage}
\par
\endgroup
 
\end{figure}

Overall, the results reveal three consistent patterns: initiating removal early is markedly more effective than starting later; more frequent removal substantially strengthens control, particularly when targeting high-degree plants; and targeted roguing provides a persistent advantage over random removal at equal effort. These findings underscore that both the timing and network structure of roguing interventions play critical roles in determining their effectiveness at the field scale.

\section{Discussion and Conclusion}\label{sec:conclusion}
We proposed a spatially explicit network-based SEIRB framework to study Frogeye Leaf Spot (FLS) dynamics in soybean fields. The model represents plant-to-plant transmission through proximity-based network connections while simultaneously incorporating a plot-level soil reservoir that captures environmental persistence and pathogen reintroduction. Coupling these above-ground and soil-mediated processes within the measured field geometry provides a mechanistic description of disease spread beyond homogeneous-mixing assumptions. To estimate epidemiological parameters under observational uncertainty, we employed an approximate Bayesian computation (ABC) scheme, enabling likelihood-free inference of transmission and decay parameters. 
The integration of ABC with the spatial SEIRB network structure yields a transparent modeling framework that enables quantitative evaluation of management strategies, including roguing and tillage, and offers practical insight into disease propagation in heterogeneous agricultural systems.

The results demonstrate that the ABC-based calibration effectively estimated key model parameters and highlighted the strong influence of initial infection patterns on subsequent epidemic dynamics. Simulations further suggested that tillage practice, under the examined field conditions, did not produce a measurable difference in epidemic outcomes, implying that tillage alone may not substantially alter FLS transmission risk. In contrast, early and consistent roguing interventions produced substantial reductions in epidemic severity and final infection size, underscoring the importance of timely and targeted removal for effective disease control.

Early roguing consistently produced stronger epidemiological benefits, reducing both the peak prevalence and the final epidemic size across all simulated timing and intensity combinations. These improvements arise because, during late vegetative to early blooming stages, canopy closure is still limited and the soil inoculum reservoir remains comparatively low; removing infectious plants at this stage prevents substantial future shedding and restricts the growth of the environmental reservoir. In contrast, later roguing occurs under denser canopy contact and elevated soil inoculum, making infection hotspots more established and harder to eliminate even when larger fractions of plants are removed. The simulations also show that frequent scouting, ideally every day although weekly removal still provides measurable benefit, and prioritizing plants with many neighbors, such as those at row intersections, dense patches, or structurally connected bridge plants, leads to the greatest reductions in disease burden. Taken together, these results indicate that the most effective roguing strategy combines early intervention, frequent scouting, and targeted removal, yielding the lowest epidemic levels and the healthiest canopy at the end of the season.

While the proposed SEIRB framework captures the major processes driving FLS spread, several limitations should be acknowledged. First, the model currently assumes constant environmental conditions and does not explicitly incorporate weather variables such as temperature, humidity, and rainfall, which are known to influence fungal sporulation, dispersal, and decay. 
Incorporating weather-dependent transmission and decay functions would improve predictive accuracy and seasonal realism. Second, the spatial configuration of the initially infected plants was reconstructed from random or idealized seeding scenarios because their exact field locations were not recorded. This uncertainty may affect the quantitative calibration of spatial transmission parameters and the resulting epidemic trajectories.  Obtaining precise initial infection maps from future field observations would enable stronger validation and improve parameter identifiability. Despite these constraints, the model provides a strong baseline framework for integrating environmental and spatial heterogeneity in future extensions.

Although the model was formulated for FLS, the underlying spatial network framework can be easily adapted to study other plant diseases such as soybean rust, peanut leaf spot, and bacterial spot in tomato. Its explicit representation of field geometry makes it broadly applicable to spatially structured cropping systems and provides a transferable tool for evaluating disease-management strategies across diverse agricultural settings.

Building on this generalizable structure, future work could extend the SEIRB model to incorporate additional sources of field-level variation that influence disease spread. Weather conditions such as humidity, leaf wetness, and temperature-driven changes in pathogen activity could be integrated to improve predictive realism, while spatial factors such as soil texture variation, row geometry, and small-scale topography could refine representation of the environmental reservoir. Seasonal changes in canopy structure and plant development may also be incorporated to capture shifts in contact patterns and exposure pathways throughout the growing season. Another important direction is the inclusion of economic and operational constraints, enabling scouting schedules, fungicide applications, and removal strategies to be evaluated in terms of both biological impact and resource feasibility. By integrating realistic field geometry, dual transmission routes, and data-informed calibration, the proposed modeling framework provides mechanistic insight into the drivers of FLS spread and establishes a foundation for practical decision-making. More broadly, this approach supports the long-term goals of precision agriculture and sustainable crop protection in spatially structured cropping systems.

\appendix

\section*{Conflict of interest}
The authors declare that there is no conflict of interest.
\appendix

\section*{Reproducible code and scripts}\label{sec:codeFLS}
The code supporting this study is publicly available at \href{https://github.com/CPW93/FLS-Disease-Network-Model}{github.com/CPW93/FLS-Disease-Network-Model}.

\section*{Acknowledgment}
	This research work was partially supported by the NSF Award 2324691.

\section*{Appendix}\label{sec:appendix}

The following presents theoretical results for the equilibrium soil reservoir.

\subsection*{Theoretical Results}\label{sec:theory_result}

We formalize the relationship between infection pressure, management parameters, and the resulting soil reservoir dynamics at the plot level. The following lemma gives a closed-form equilibrium for $B_p$ and shows that the equilibrium inoculum increases with infection intensity $I_p$ and deposition $\xi$ and decreases with decay $\tau_{P(p)}$. 

\begin{lemma} \label{lem:1}
Let $r_{0}>0$, $k_{0}>0$, $\xi\ge 0$, $\tau_{P(p)}\ge 0$ and $I_p\ge 0$ be constants for a given plot $p$.
Consider the soil reservoir dynamics
\[
\dot B_p(t)=\big(r_{0}-\tau_{P(p)}\big)B_p(t)-\frac{r_{0}}{k_{0}}\,B_p(t)^2+\xi\,I_p ,
\]
where $\dot B_p(t)$ denotes the time derivative of $B_p(t)$, representing the instantaneous rate of change of the soil inoculum in plot $p$.

Then there is a unique nonnegative equilibrium $B_p^\star$ given by
\[
B_p^\star=\frac{k_{0}}{2r_{0}}\!\left[(r_{0}-\tau_{P(p)})+
\sqrt{\,\big(r_{0}-\tau_{P(p)}\big)^2+\frac{4r_{0}}{k_{0}}\,\xi\,I_p}\right].
\]
Moreover,
\[
\frac{\partial B_p^\star}{\partial I_p}
=\frac{\xi}{\sqrt{\,(r_{0}-\tau_{P(p)})^2+\tfrac{4r_{0}}{k_{0}}\,\xi\,I_p}}\;>\;0,\qquad
\frac{\partial B_p^\star}{\partial \xi}
=\frac{I_p}{\sqrt{\,(r_{0}-\tau_{P(p)})^2+\tfrac{4r_{0}}{k_{0}}\,\xi\,I_p}}\;>\;0,
\]
\[
\frac{\partial B_p^\star}{\partial \tau_{P(p)}}
=-\,\frac{k_{0}}{2r_{0}}\!\left(1+\frac{r_{0}-\tau_{P(p)}}{\sqrt{\,(r_{0}-\tau_{P(p)})^2+\tfrac{4r_{0}}{k_{0}}\,\xi\,I_p}}\right)\;<\;0.
\]

\end{lemma}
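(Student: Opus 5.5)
The plan is to treat the equilibrium condition as a quadratic in $B_p$ and read everything off the quadratic formula. Write $a:=r_{0}-\tau_{P(p)}$, $c:=\xi I_p\ge 0$ and $q:=r_{0}/k_{0}>0$. Setting $\dot B_p=0$ gives $f(B):=-qB^2+aB+c=0$, that is $qB^2-aB-c=0$. The roots are
\[
B_\pm=\frac{a\pm\sqrt{a^2+4qc}}{2q}=\frac{k_{0}}{2r_{0}}\Bigl[a\pm\sqrt{a^2+\tfrac{4r_{0}}{k_{0}}c}\Bigr].
\]
The discriminant is nonnegative because $c\ge 0$. By Vieta, the product of the roots is $B_+B_-=-c/q\le 0$. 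When $c>0$ this forces exactly one positive root and one negative root. Since $\sqrt{a^2+4qc}>|a|$, the positive root is $B_+$, which is the stated $B_p^\star$. This settles existence and uniqueness of the nonnegative equilibrium in the generic case $\xi I_p>0$.

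Next I would treat the degenerate case $\xi I_p=0$ separately, since this is where the statement as written needs care. The roots are then $0$ and $a/q$. If $a\le 0$, the only nonnegative equilibrium is $0$, and the formula gives $\frac{k_0}{2r_0}(a+|a|)=0$, so it agrees. If $a>0$, both $0$ and $k_{0}(r_{0}-\tau_{P(p)})/r_{0}$ are nonnegative equilibria. The formula selects the positive one, which is the stable one since $f'(B_+)=-\sqrt{a^2+4qc}<0$. So in the proof I would state that uniqueness and the strict inequalities are claimed under $\xi I_p>0$. For $\xi I_p=0$ I would say that $B_p^\star$ is the largest (stable) nonnegative equilibrium, and that the inequalities hold weakly. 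I expect this bookkeeping of edge cases to be the main obstacle, more than any computation. In particular, when $\xi I_p=0$ and $a=0$ the square root vanishes and the derivative formulas are undefined.

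For the comparative statics, set $D:=\sqrt{a^2+4qc}$, so that $B_p^\star=(a+D)/(2q)$. I would differentiate directly, using $\partial D/\partial c=2q/D$ and $\partial D/\partial a=a/D$. This gives $\partial B_p^\star/\partial c=1/D$. The chain rule with $c=\xi I_p$ then gives $\partial B_p^\star/\partial I_p=\xi/D$ and $\partial B_p^\star/\partial\xi=I_p/D$, exactly as stated. These are strictly positive once $\xi>0$ (respectively $I_p>0$) and $D>0$. For $\tau_{P(p)}$ we have $\partial a/\partial\tau=-1$, so $\partial B_p^\star/\partial\tau=-\frac{1}{2q}(1+a/D)=-\frac{k_0}{2r_0}\bigl(1+\frac{a}{D}\bigr)$. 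Strict negativity follows from $|a|<D$ whenever $c>0$, which gives $1+a/D>0$.

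As a cross-check, I would also verify the signs by implicit differentiation of $f(B_p^\star)=0$. This gives $\partial B^\star/\partial c=-1/f'(B^\star)=1/D$ and $\partial B^\star/\partial\tau=-B^\star/D$. The latter is negative whenever $B^\star>0$, and it agrees with the closed form because $B^\star=(a+D)/(2q)$.
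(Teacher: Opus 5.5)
Your proposal is correct and follows essentially the paper's route: you solve the equilibrium quadratic, keep the $+$ root, and differentiate the closed form; the only difference in root selection is that you use the product of the roots where the paper uses concavity of $g$ with $g(0)=\xi I_p\ge 0$. Your handling of the case $\xi I_p=0$ is sharper than the paper's proof, which asserts a unique nonnegative root even when $\xi I_p=0$ and $r_{0}>\tau_{P(p)}$ (where $B=0$ is a second equilibrium), and which attributes the strict signs solely to positivity of the square root, although these signs degenerate when $\xi=0$ or $I_p=0$ and strict negativity in $\tau_{P(p)}$ really requires $B_p^\star>0$, as your identity $\partial B_p^\star/\partial\tau_{P(p)}=-B_p^\star/D$ shows.
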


\begin{proof}
In the coupled host-environment system, the infection dynamics $I(t)$ typically evolve on a much faster time scale than the concentration of environmental pathogens $B(t)$. Therefore, for analytical tractability, we assume that $I(t)$ rapidly approaches a quasi-steady state relative to the slower evolution of $B(t)$. This allows us to treat $I(t)$ as approximately constant when analyzing the equilibrium and stability of the fungus dynamics. Such an assumption is known as time-scale separation or the quasi-steady-state approximation.

At equilibrium, $0=\big(r_{0}-\tau_{P(p)}\big)B-\frac{r_{0}}{k_{0}}B^2+\xi I_p = g(B)$, a concave quadratic in $B$ with
$g(0)=\xi I_p\ge 0$ and $g(B)\to -\infty$ as $B\to\infty$, so there is a unique nonnegative root.
Solving the quadratic yields the stated $B_p^\star$ (take the “positive” root to keep $B_p^\star\ge 0$).
Differentiating the closed form with respect to $I_p$, $\xi$, and $\tau_{P(p)}$ gives the displayed
partials; their signs follow since the square-root denominator is strictly positive for all parameters
in the stated ranges.

For the no-shedding special case $I_p=0$, the formula reduces to $B_p^\star=0$ if $r_{0}\le \tau_{P(p)}$ and
$B_p^\star=k_{0}(1-\tau_{P(p)}/r_{0})$ if $r_{0}>\tau_{P(p)}$, matching the logistic-with-decay limit.
Finally, if $\beta_{P(i)}>0$, the hazard term $\beta_{P(i)}B_{p(i)}$ is monotone in $B_{p(i)}$, so the same comparative statics apply.
\end{proof}

\noindent\textbf{Remark.}
Lemma~\ref{lem:1} tells us that higher infection levels and deposition rates lead to increased equilibrium soil inoculum, while longer decay times suppress it. This pattern is consistent with finding the soil-driven hazard, which rises with infection and deposition and falls with decay by \cite{BockusShroyer1998_ReducedTillage, Krupinsky2002_ManagingRisk}.

\printbibliography

@article{MBE2024FLS,
  author  = {Yang, Chayu and Wang, Jin},
  title   = {A mathematical model for frogeye leaf spot epidemics in soybean},
  journal = {Mathematical Biosciences and Engineering},
  year    = {2024},
  volume  = {21},
  number  = {1},
  pages   = {1144--1166},
  doi     = {10.3934/mbe.2024048},
  url     = {https://www.aimspress.com/article/doi/10.3934/mbe.2024048}
}

@article{KeelingEames2005,
  author  = {Keeling, Matt J. and Eames, Ken T. D.},
  title   = {Networks and epidemic models},
  journal = {Journal of the Royal Society Interface},
  year    = {2005},
  volume  = {2},
  number  = {4},
  pages   = {295--307},
  url = {https://doi.org/10.1098/rsif.2005.0051}
}

@article{Danon2011,
  author  = {Danon, Leon and Ford, Ashley P. and House, Thomas and Jewell, Chris P. and Keeling, Matt J. and Roberts, Gareth O. and Ross, Joshua V. and Vernon, Matt C.},
  title   = {Networks and the Epidemiology of Infectious Disease},
  journal = {Interdisciplinary Perspectives on Infectious Diseases},
  year    = {2011},
  pages   = {284909},
  doi     = {10.1155/2011/284909}
}

@article{Neves2022,
  author  = {Neves, D. L. and others},
  title   = {First detection of frogeye leaf spot in soybean fields in North Dakota and the G143A mutation in the cytochrome b gene of {Cercospora sojina}},
  journal = {Plant Health Progress},
  year    = {2022},
  volume  = {23},
  number  = {3},
  doi     = {10.1094/PHP-10-21-0132-BR}
}

@article{ShawPautasso2014,
  author  = {Shaw, M. W. and Pautasso, M.},
  title   = {Networks and Plant Disease Management: Concepts and Applications},
  journal = {Annual Review of Phytopathology},
  year    = {2014},
  volume  = {52},
  pages   = {477--493},
  doi     = {10.1146/annurev-phyto-102313-050229}
}

@article{Gilligan2008,
  author = {Gilligan, Christopher A},
    title = {Sustainable agriculture and plant diseases: an epidemiological perspective},
    journal = {Philosophical Transactions of the Royal Society B: Biological Sciences},
    volume = {363},
    number = {1492},
    pages = {741-759},
    year = {2007},
    month = {09},
    issn = {0962-8436},
    doi = {10.1098/rstb.2007.2181},
    url = {https://doi.org/10.1098/rstb.2007.2181},
    eprint = {https://royalsocietypublishing.org/rstb/article-pdf/363/1492/741/375058/rstb.2007.2181.pdf}
}

@article{Barro2023FLS,
  author  = {Barro, J. P. and Neves, D. L. and Del Ponte, and Emerson M. and others},
  title   = {Frogeye leaf spot caused by {Cercospora sojina}: A review},
  journal = {Tropical Plant Pathology},
  year    = {2023},
  volume  = {48},
  pages   = {363--374},
  doi     = {10.1007/s40858-023-00583-8}
}

@misc{NCStateFLS,
  author       = {{North Carolina State Extension}},
  title        = {Frogeye Leaf Spot of Soybean},
  year         = {2022},
  howpublished = {\url{https://soybeans.ces.ncsu.edu/}},
  note         = {Accessed 2025-09-25}
}

@misc{NDSU_FLS2023,
  author       = {{North Dakota State University Extension}},
  title        = {Soybean Disease Diagnostic Series},
  howpublished = {\url{https://www.ndsu.edu/agriculture/extension/publications/soybean-disease-diagnostic-series}},
  note         = {Accessed 2025-09-26}
}

@misc{ISU_FLS,
  author    = {Stoetzer, Ethan},
  title     = {Frogeye Leaf Spot Making an Appearance Across Iowa Soybeans},
  year      = {2018},
  url       = {https://crops.extension.iastate.edu/cropnews/2018/08/frogeye-leaf-spot-making-appearance-across-iowa-soybeans},
  note      = {Integrated Crop Management News, Iowa State University Extension and Outreach; Accessed 2025-09-26}
}

@article{Allen2017,
  author  = {Allen, Tom W. and Bradley, Carl A. and Sisson, Adam J. and Byamukama, Emmanuel and Chilvers, Martin I. and Coker, Cliff M. and Collins, Alyssa A. and Damicone, John P. and Dorrance, Anne E. and Dufault, Nicholas S. and Esker, Paul D. and Faske, Travis R. and Giesler, Loren J. and Grybauskas, Arvydas P. and Hershman, Donald E. and Hollier, Clayton A. and Isakeit, Tom and Jardine, Douglas J. and Kelly, Heather M. and Kemerait, Robert C. and Kleczewski, Nathan M. and Koenning, Steve R. and Kurle, James E. and Malvick, Dean K. and Markell, Samuel G. and Mehl, Hillary L. and Mueller, Daren S. and Mueller, John D. and Mulrooney, Robert P. and Nelson, Berlin D. and Newman, Melvin A. and Osborne, Larry and Overstreet, Charles and Padgett, G. Boyd and Phipps, Patrick M. and Price, Paul P. and Sikora, Edward J. and Smith, Damon L. and Spurlock, Terry N. and Tande, Connie A. and Tenuta, Albert U. and Wise, Kiersten A. and Wrather, J. Allen},
  title   = {Soybean Yield Loss Estimates Due to Diseases in the United States and Ontario, Canada, from 2010 to 2014},
  journal = {Plant Health Progress},
  year    = {2017},
  volume  = {18},
  number  = {1},
  pages   = {19--27},
  doi     = {10.1094/PHP-RS-16-0066}
}

@article{Bradley2021,
  author  = {Bradley, Carl A. and Allen, Tom W. and Sisson, Adam J. and Bergstrom, Gary C. and Bissonnette, Kaitlyn M. and Bond, Jason and Byamukama, Emmanuel and Chilvers, Martin I. and Collins, Alyssa A. and Damicone, John P. and Dorrance, Anne E. and Dufault, Nicholas S. and Esker, Paul D. and Faske, Travis R. and Fiorellino, Nicole M. and Giesler, Loren J. and Hartman, Glen L. and Hollier, Clayton A. and Isakeit, Tom and Jackson-Ziems, Tamra A. and Jardine, Douglas J. and Kelly, Heather M. and Kemerait, Robert C. and Kleczewski, Nathan M. and Koehler, Alyssa M. and Kratochvil, Robert J. and Kurle, James E. and Malvick, Dean K. and Markell, Samuel G. and Mathew, Febina M. and Mehl, Hillary L. and Mehl, Kelsey M. and Mueller, Daren S. and Mueller, John D. and Nelson, Berlin D. and Overstreet, Charles and Padgett, G. Boyd and Price, Paul P. and Sikora, Edward J. and Small, Ian and Smith, Damon L. and Spurlock, Terry N. and Tande, Connie A. and Telenko, Darcy E. P. and Tenuta, Albert U. and Thiessen, Lindsey D. and Warner, Fred and Wiebold, William J. and Wise, Kiersten A.},
  title   = {Soybean Yield Loss Estimates Due to Diseases in the United States and Ontario, Canada, from 2015 to 2019},
  journal = {Plant Health Progress},
  year    = {2021},
  volume  = {22},
  number  = {4},
  pages   = {483--495},
  doi     = {10.1094/PHP-01-21-0013-RS}
}

@incollection{Wise2015,
  author    = {Hartman, Glen L. and Rupe, John C. and Sikora, Edward J. and Domier, Leslie L. and Davis, Jeffrey A. and Steffey, Kevin L.},
  title     = {Soybean Disease and Pest Management Strategies},
  booktitle = {Compendium of Soybean Diseases and Pests},
  
  edition   = {5},
  year      = {2016},
  pages     = {167--173},
  address   = {St. Paul, MN},
  publisher = {American Phytopathological Society},
  isbn      = {978-0-89054-475-4},
  note      = {eBook DOI (book-level): 10.1094/9780890544754},
  url = {https://doi.org/10.1094/9780890544754}
}

@article{SIRModel,
  author  = {Wang, Bingqing and Hu, Yanyao and Cen, Zhongdi and Huang, Jian and He, Tianfeng and Xu, Aimin},
  title   = {A {SIR} Model with Incomplete Data for the Analysis of Influenza A Spread in Ningbo},
  journal = {Discrete Dynamics in Nature and Society},
  year    = {2024},
  volume  = {2024},
  number  = {1},
  pages   = {7694770},
  doi     = {10.1155/2024/7694770}
}

@book{Murray2003,
  author    = {Murray, James D.},
  title     = {Mathematical Biology II: Spatial Models and Biomedical Applications},
  edition   = {3},
  year      = {2003},
  series    = {Interdisciplinary Applied Mathematics},
  volume    = {18},
  publisher = {Springer},
  address   = {New York},
  doi       = {10.1007/b98869},
  url       = {https://link.springer.com/book/10.1007/b98869}
}

@article{Garrett2018Networks,
  author  = {Garrett, Karen A. and Alcal{\'a}-Brise{\~n}o, Ricardo I. and Andersen, Katherine F. and Buddenhagen, Christine E. and Choudhury, Rishi A. and Fulton, Jeffrey C. and Hernandez Nopsa, Jorge F. and Poudel, Rachana and Xing, Yanru},
  title   = {Network Analysis: A Systems Framework to Address Grand Challenges in Plant Pathology},
  journal = {Annual Review of Phytopathology},
  year    = {2018},
  volume  = {56},
  pages   = {559--580},
  doi     = {10.1146/annurev-phyto-080516-035326}
}

@article{Mengistu2014FLS,
  author  = {Mengistu, Alemu and Kelly, Heather M. and Bellaloui, Nacer and Arelli, Prakash R. and Reddy, Krishna N. and Wrather, Allen J.},
  title   = {Tillage, Fungicide, and Cultivar Effects on Frogeye Leaf Spot Severity and Yield in Soybean},
  journal = {Plant Disease},
  year    = {2014},
  volume  = {98},
  number  = {11},
  pages   = {1476--1484},
  doi     = {10.1094/PDIS-12-13-1268-RE}
}

@article{Pudlo2016,
  author  = {Pudlo, P. and Marin, J.-M. and Estoup, A. and Cornuet, J.-M. and Gauthier, M. and Robert, C. P.},
  title   = {Reliable {ABC} model choice via random forests},
  journal = {Bioinformatics},
  year    = {2016},
  volume  = {32},
  number  = {6},
  pages   = {859--866},
  url = {https://doi.org/10.1093/bioinformatics/btv684}
}

@article{Raynal2019,
  author  = {Raynal, L. and Marin, J.-M. and Pudlo, P. and Ribatet, M. and Robert, C. P. and Estoup, A.},
  title   = {{ABC} random forests for {B}ayesian parameter inference},
  journal = {Bioinformatics},
  year    = {2019},
  volume  = {35},
  number  = {10},
  pages   = {1720--1728},
  url = {https://doi.org/10.1093/bioinformatics/bty867}
}

@article{DrovandiPettitt2011_SMC_ABC,
  author  = {Drovandi, Christopher C. and Pettitt, Anthony N.},
  title   = {Estimation of Parameters for Macroparasite Population Evolution Using Approximate Bayesian Computation},
  journal = {Biometrics},
  year    = {2011},
  volume  = {67},
  number  = {1},
  pages   = {225--233},
  doi     = {10.1111/j.1541-0420.2010.01410.x}
}

@incollection{Nleya2019SoybeanStages,
  author    = {Nleya, Thandiwe and Sexton, Peter and Gustafson, Kyle and Moriles Miller, Janet},
  title     = {Soybean Growth Stages},
  booktitle = {iGrow Soybeans: Best Management Practices},
  publisher = {SDSU Extension, South Dakota State University},
  year      = {2019},
  chapter   = {3},
  pages     = {3-25--3-34},
  note      = {Extension handbook chapter},
  url = {https://extension.sdstate.edu/sites/default/files/2020-03/S-0004-03-Soybean.pdf}
}

@article{BockusShroyer1998_ReducedTillage,
  author  = {Bockus, W. W. and Shroyer, J. P.},
  title   = {The Impact of Reduced Tillage on Soilborne Plant Pathogens},
  journal = {Annual Review of Phytopathology},
  year    = {1998},
  volume  = {36},
  number  = {1},
  pages   = {485--500},
  doi     = {10.1146/annurev.phyto.36.1.485},
  url     = {https://doi.org/10.1146/annurev.phyto.36.1.485},
  issn    = {0066-4286},
  eissn   = {1545-2107},
  pmid    = {15012510},
  month   = aug
}

@article{Krupinsky2002_ManagingRisk,
  author    = {Krupinsky, Joseph M. and Bailey, Karen L. and McMullen, Marcia P. and Gossen, Bruce D. and Turkington, T. Kelly},
  title     = {Managing Plant Disease Risk in Diversified Cropping Systems},
  journal   = {Agronomy Journal},
  year      = {2002},
  volume    = {94},
  number    = {2},
  pages     = {198--209},
  doi       = {10.2134/agronj2002.1980},
  url       = {https://doi.org/10.2134/agronj2002.1980},
  issn      = {0002-1962},
  eissn     = {1435-0645},
  publisher = {American Society of Agronomy}
}

@article{PDE_Jin_2025,
author = {Anderson, M. and Yamazaki, K. and Yang, C. and Wang, J.},
title = {A partially diffusive model for frogeye leaf spot epidemics in soybean},
journal = {Discrete and Continuous Dynamical Systems - B},
year = {2025},
issn = {1531-3492},
doi = {10.3934/dcdsb.2025177},
url = {https://www.aimsciences.org/article/id/691440fb6547bb6e5b3da2dc}
}

@article{FARMAN,
author ={Farman, M. and Ullah, I. and Hincal, E. and Nisar, K. S. and Hosseini, K. and Sambas, A.},
title = {Stability and dynamics of fractional order frogeye leaf spot infection model with fungal density function},
journal = {Computational Biology and Chemistry},
year = {2025},
volume = {120},
issn = {1476-9271},
doi = {},
url = {https://www.sciencedirect.com/science/article/pii/S1476927125003858},

}

@article{agronomy_machine_learning,
title = {HSDT-TabNet: A Dual-Path Deep Learning Model for Severity Grading of Soybean Frogeye Leaf Spot},
journal = {Agronomy},
volume = {15},
pages = {108724},
year = {2025},
issn = {2073-4395},
url ={https://www.mdpi.com/2073-4395/15/7/1530},
author = {Li, X. and Zhou, Y. and Li, Y. and Wang, S. and Bian, W. and Sun, H.},
}

@article{agronomy_FLS_Seviarity,
author = {Phillips, X. A. and Kandel, Y. R. and Mueller, D. S.},
title = {Impact of Foliar Fungicides on Frogeye Leaf Spot Severity, Radiation Use Efficiency and Yield of Soybean in Iowa},
journal = {Agronomy},
volume = {11},
year = {2021},
url = {https://www.mdpi.com/2073-4395/11/9/1785},
issn = {2073-4395},
doi = {https://doi.org/10.3390/agronomy11091785}
}

@article{Franks01102020,
author = {Jordan J. Franks},
title = {Handbook of Approximate Bayesian Computation.},
journal = {Journal of the American Statistical Association},
volume = {115},
number = {532},
pages = {2100--2101},
year = {2020},
publisher = {ASA Website},
doi = {10.1080/01621459.2020.1846973},


URL = { 
        https://doi.org/10.1080/01621459.2020.1846973
},
eprint = { 
        https://doi.org/10.1080/01621459.2020.1846973
}

}

@article{Cai2024FLS,
author = {Bandara, Ananda Y., and Weerasooriya, Dilooshi K., and Bradley, Carl A., and Allen, Tom W., and Esker, Paul D.},
title = {Dissecting the economic impact of soybean diseases in the United States over two decades},

note = {Publisher Copyright: {\textcopyright} 2020 Bandara et al. This is an open access article distributed under the terms of the Creative Commons Attribution License, which permits unrestricted use, distribution, and reproduction in any medium, provided the original author and source are credited.},
year = {2020},
doi = {10.1371/journal.pone.0231141},
language = {English (US)},
volume = {15},
journal = {PloS one},
issn = {1932-6203},
publisher = {Public Library of Science},
number = {4},
}

@article{Neves2020QoI,
  author = {Neves, Danilo L. and Chilvers, Martin I. and Jackson-Ziems, Tamra A. and Malvick, Dean K. and Bradley, Carl A.},
title = {Resistance to Quinone Outside Inhibitor Fungicides Conferred by the G143A Mutation in Cercospora sojina (Causal Agent of Frogeye Leaf Spot) Isolates from Michigan, Minnesota, and Nebraska Soybean Fields},
journal = {Plant Health Progress},
volume = {21},
number = {4},
pages = {230-231},
year = {2020},
doi = {10.1094/PHP-06-20-0052-BR},

URL = { 
    
        https://doi.org/10.1094/PHP-06-20-0052-BR
    
    

},
eprint = { 
    
        https://doi.org/10.1094/PHP-06-20-0052-BR
    
    

}

}

@article{Neves2021Baseline,
  title = "Sensitivity of Cercospora sojina to demethylation inhibitor and methyl benzimidazole carbamate fungicides",
author = "Guirong Zhang and Neves, \{Danilo L.\} and Kelsey Krausz and Bradley, \{Carl A.\}",
note = "Publisher Copyright: {\textcopyright} 2021 Elsevier Ltd",
year = "2021",
month = nov,
doi = "10.1016/j.cropro.2021.105765",
language = "English",
volume = "149"
}

@article{Bradley2023INCCA,
  author  = {Barro, Jhonatan P. and Del Ponte, Emerson M. and Allen, Tom W. and Bond, Jason P. and Faske, Travis R. and Hollier, Clayton A. and Kandel, Yuba R. and Mueller, Daren S. and Kelly, Heather M. and Kleczewski, Nathan M. and Ames, Keith A. and Price, Paul P. and Sikora, Edward J. and Bradley, Carl A.},
  title   = {Efficacy and Profitability of Fungicides for Managing Frogeye Leaf Spot on Soybean in the United States: A 10-Year Quantitative Summary},
  journal = {Plant Disease},
  year    = {2023},
  doi     = {10.1094/PDIS-02-23-0291-RE},
  note    = {Published online 26 October 2023}
}

@article{Dashiell1991,
  author       = {Dashiell, K. E. and Akem, C. N.},
  title        = {Yield losses in soybeans from frogeye leaf spot caused by \textit{Cercospora sojina}},
  journal      = {Crop Protection},
  year         = {1991},
  volume       = {10},
  number       = {6},
  pages        = {465--468},
  doi          = {10.1016/S0261-2194(91)80134-2}
}

@article{KermackMcKendrick1927,
  author  = {Kermack, W. O. and McKendrick, A. G.},
  title   = {A Contribution to the Mathematical Theory of Epidemics},
  journal = {Proceedings of the Royal Society of London. Series A, Containing Papers of a Mathematical and Physical Character},
  volume  = {115},
  number  = {772},
  pages   = {700--721},
  year    = {1927},
  doi     = {10.1098/rspa.1927.0118}
}

@misc{CPN2024,
  author = {{Crop Protection Network}},
  title  = {Soybean Disease Loss Estimates from the United States and Ontario, Canada -- 2024},
  year   = {2024},
  url    = {https://loss.cropprotectionnetwork.org/},
  note   = {Accessed September 2024}
}

@techreport{Webster2026NDReport,
  author      = {Webster, Wade and Dusek, Gabe and Becton, Hope},
  title       = {2025 Extension Soybean Pathology Field Research Reports},
  institution = {North Dakota State University Extension},
  year        = {2026},
  number      = {PP2290},
  url         = {https://www.ndsu.edu/agriculture/extension/publications/2025-extension-soybean-pathology-field-research-reports}
}

@article{Zhang2012,
  author  = {Zhang, G. R. and Newman, M. A. and Bradley, C. A.},
  title   = {First report of the soybean frogeye leaf spot pathogen \textit{Cercospora sojina} resistant to quinone outside inhibitor fungicides in the United States},
  journal = {Plant Disease},
  year    = {2012},
  volume  = {96},
  number  = {5},
  pages   = {767},
  doi     = {10.1094/PDIS-10-11-0915-PDN}
}

@article{Cai2024IndianaFLS,
  author  = {Cai, Guohong and Lopes da Silva, Leandro and Pi{\~n}eros-Guerrero, Natalia and Telenko, Darcy E. P.},
  title   = {Population Structure and Mating Type Distribution of {\textit{Cercospora sojina}} from Soybeans in Indiana, United States},
  journal = {Journal of Fungi},
  year    = {2024},
  volume  = {10},
  number  = {11},
  pages   = {802},
  doi     = {10.3390/jof10110802},
  url     = {https://doi.org/10.3390/jof10110802}
}

@article{Neves2022ND,
  author  = {Neves, Danilo L. and Berghuis, Brandt G. and Halvorson, Jessica M. and Hansen, Bryan C. and Markell, Samuel G. and Bradley, Carl A.},
  title   = {First Detection of Frogeye Leaf Spot in Soybean Fields in North Dakota and the \textit{G143A} Mutation in the Cytochrome \textit{b} Gene of \textit{Cercospora sojina}},
  journal = {Plant Health Progress},
  year    = {2022},
  doi     = {10.1094/PHP-10-21-0132-BR},
  url     = {https://doi.org/10.1094/PHP-10-21-0132-BR}
}

\end{document}